\documentclass{LMCS}

\def\dOi{10(1:6)2014}
\lmcsheading%
{\dOi}
{1--15}
{}
{}
{Jan.~\phantom01, 2013}
{Feb.~11, 2014}
{}

\ACMCCS{[{\bf Theory of computation}]: Computational complexity and
  cryptography---Problems, reductions and completeness; [{\bf
      Mathematics of computing}]: Mathematical analysis---Numerical
  analysis---Interval arithmetic; Mathematical analysis---Differential
  equations---Ordinary differential equations} 

\keywords{computable analysis, counting hierarchy, differential
  equations}

\pdfoutput=1

\usepackage{amsmath,amssymb,hyperref}
\usepackage{graphicx}

\usepackage{amsthm}
\newtheorem{theorem}{Theorem}
\newtheorem{lemma}[theorem]{Lemma}

\theoremstyle{definition}
\newtheorem{definition}[theorem]{Definition}
\usepackage{CJK,CJKspace,CJKpunct}
\pdfmapline{=unisong@Unicode@ <ipam.ttf}

\newcommand{\R}{\mathbf R}
\newcommand{\N}{\mathbf N}
\newcommand{\Q}{\mathbf Q}
\newcommand{\Z}{\mathbf Z}
\newcommand{\D}{D}
\newcommand{\classP}{\mathsf{P}}
\newcommand{\classPSPACE}{\mathsf{PSPACE}}
\newcommand{\classNP}{\mathsf{NP}}
\newcommand{\classNumberP}{\mathsf{\#P}}
\newcommand{\classPH}{\mathsf{PH}}
\newcommand{\classPP}{\mathsf{PP}}
\newcommand{\classCH}{\mathsf{CH}}
\newcommand{\classSigma}{\mathsf{\Sigma}}
\newcommand{\quantC}{\mathsf{C}}

\newcommand{\OpIVP}{\mathit{ODE}}
\newcommand{\deltabox}{\delta _\square}
\newcommand{\deltaboxLip}{\delta _{\square \mathrm L}}
\newcommand{\deltaTaylor}{\delta _{\mathrm{Taylor}}}
\newcommand{\classtwofont}[1]{\text{\bfseries \sffamily \upshape #1}}
\newcommand{\classFPtwo}{\classtwofont{FP}}
\newcommand{\classFPSPACEtwo}{\classtwofont{FPSPACE}}

\newcommand{\redW}{\leq _{\mathrm W}}
\newcommand{\classLip}{\mathrm{CL}}
\newcommand{\classC}{\mathrm C}


\begin{document}

\title[Computational Complexity of Smooth Differential Equations]{Computational Complexity of\\Smooth Differential Equations\rsuper*}
\titlecomment{%
{\lsuper*}Some of the results in this paper have been reported at conferences
\cite{eatcs-la, mfcs}. 
This work was supported in part by: 
\emph{Kakenhi} (Grants-in-Aid for Scientific Research, Japan) 23700009 and 24106002; 
\emph{Marie Curie International Research Staff Exchange Scheme Fellowship}~294962, 
the 7th European Community Framework Programme; and 
\emph{German Research Foundation} with project DFG Zi\,1009/4-1.}

\author[A.~Kawamura]{Akitoshi Kawamura}  
\author[H.~Ota]{Hiroyuki Ota}            
\author[C.~R\"osnick]{Carsten R\"osnick} 
\author[M.~Ziegler]{Martin Ziegler}      

\begin{abstract}
The computational complexity of the solution~$h$ to 
the ordinary differential equation 
$h(0)=0$, $h'(t) = g(t, h(t))$ 
under various assumptions on the function $g$
has been investigated. 
Kawamura showed in 2010 that the solution~$h$ can be $\classPSPACE$-hard
even if $g$ is assumed to be Lipschitz continuous and polynomial-time computable. 
We place further requirements on the smoothness of $g$ 
and obtain the following results: 
the solution~$h$ can still be $\classPSPACE$-hard
if $g$ is assumed to be of class $\classC ^1$; 
for each $k \geq 2$, 
the solution~$h$ can be hard for the counting hierarchy 
even if $g$ is of class $\classC ^k$. 
\end{abstract}

\maketitle

\section{Introduction}

Let $g \colon [0,1] \times \R \to \R$ be continuous 
and consider the differential equation 
\begin{align}
 \label{eq:ode}
 h(0) & = 0, &
 \D h(t) & = g(t,h(t)) \quad t \in [0,1], 
\end{align}
where $\D h$ denotes the derivative of $h$. 
How complex can the solution~$h$ be, 
assuming that $g$ is polynomial-time computable? 
Here, polynomial-time computability 
and other notions of complexity 
are from the field of 
\emph{Computable Analysis} 
\cite{ko1991complexity,weihrauch00:_comput_analy}
and measure how hard it is to 
approximate real functions with specified precision 
(Section~\ref{section: preliminaries}). 

If we make no assumption on $g$ other than being polynomial-time computable, 
the solution~$h$ (which is not unique in general) can be non-computable. 
Table~\ref{table:related} summarizes known results about 
the complexity of $h$ under various assumptions 
(that get stronger as we go down the table). 
In particular, 
as the third row says, 
if $g$ is (globally) Lipschitz continuous, 
then the (unique) solution $h$ is known to be 
polynomial-space computable but still can be 
$\classPSPACE$-hard \cite{kawamura2010lipschitz}. 
In this paper, we study the complexity of $h$ 
when we put stronger assumptions about 
the smoothness of $g$. 

\begin{table}
\renewcommand\arraystretch{1.3}
\begin{center}
 \caption{Complexity of the solution $h$ of \eqref{eq:ode}
 assuming $g$ is polynomial-time computable}
 \label{table:related}
\small\vspace*{4pt}
 \begin{tabular}{lll}
  Assumptions & Upper bounds & Lower bounds \\
  \noalign{\smallskip}
  \hline
  \noalign{\smallskip}
   --- & --- & can be all non-computable \cite{pour1979computable} \\
  $h$ is the unique solution & computable \cite{coddington1955theory}
  & \parbox[t]{14em}{can take arbitrarily long time\\\cite{ko1983computational,miller1970recursive}} \\
  the Lipschitz condition  & polynomial-space \cite{ko1983computational}
      &	can be $\classPSPACE$-hard \cite{kawamura2010lipschitz}\\
  $g$ is of class $\classC ^{(\infty, 1)}$ & polynomial-space 
      & \parbox[t]{14em}{can be $\classPSPACE$-hard\\(Theorem~\ref{DifferentiableIsPspace})} \\
  \parbox[t]{10.55em}{$g$ is of class $\classC ^{(\infty, k)}$\\{}(for each constant $k$)}
  & polynomial-space 
  & can be $\classCH$-hard (Theorem~\ref{KTimesIsCH}) \\
  $g$ is analytic
  & polynomial-time \cite{muller1987uniform,ko1988computing,kawamura2010complexity} 
  & ---
 \end{tabular}
\end{center}
\end{table}

In numerical analysis, 
knowledge about smoothness of the input function 
(such as being differentiable enough times) 
often helps
to apply certain algorithms or simplify their analysis.
However, 
to our knowledge, 
this casual understanding that smoothness is good 
has not been rigorously substantiated 
in terms of computational complexity theory. 
This motivates us to ask whether, 
for our differential equation \eqref{eq:ode}, 
smoothness really reduces the complexity of the solution. 

At one extreme is the case where $g$ is analytic: 
$h$ is then polynomial-time computable 
(the last row of the table) 
by an argument based on Taylor series\footnote{
As shown by M\"uller \cite{muller1987uniform} and 
Ko and Friedman \cite{ko1988computing}, 
polynomial-time computability of an analytic function 
on a compact interval is 
equivalent to that of its Taylor sequence at a point 
(although the latter is a local property, 
polynomial-time computability on the whole interval is implied 
by analytic continuation; 
see \cite[Corollary~4.5]{muller1987uniform}
or \cite[Theorem~11]{kawamura2010complexity}). 
This implies the polynomial-time computability of $h$, 
since we can efficiently compute the 
Taylor sequence of $h$ from that of $g$. 
} (this does not necessarily mean that 
computing the values of $h$ from those of $g$ is easy; 
see the last paragraph of Section~\ref{section: constructive}). 
Thus our interest is in 
the cases between Lipschitz and analytic 
(the fourth and fifth rows). 
We say that $g$ is of class $\classC ^{(i, j)}$
if the partial derivative $\D ^{(n, m)} g$ 
(often also denoted $\partial ^{n + m} g (t, y) / \partial t ^n \partial y ^m$)
exists and is continuous for all $n \le i$ and $m \le j$;%
\footnote{%
Another common terminology (which we used in the abstract)
is to say that $g$ is of class $\classC ^k$
if it is of class $\classC ^{(i,j)}$ 
for all $i$, $j$ with $i + j \leq k$.}
it is said to be of class $\classC ^{(\infty, j)}$ if
it is of class $\classC ^{(i, j)}$ for all $i \in \N$. 

We will show that adding continuous differentiability does not break the
$\classPSPACE$-completeness that we knew from \cite{kawamura2010lipschitz} 
for the Lipschitz continuous case: 

\begin{theorem}
 \label{DifferentiableIsPspace}
There exists a polynomial-time computable function
$g \colon [0,1] \times [-1,1] \to \R$ 
of class $\classC ^{(\infty, 1)}$ such that
the equation \eqref{eq:ode} has a 
$\classPSPACE$-hard solution $h \colon [0, 1] \to \R$. 
 \end{theorem}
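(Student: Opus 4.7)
The plan is to revisit Kawamura's construction \cite{kawamura2010lipschitz}, which gives a polynomial-time computable Lipschitz $g$ whose unique ODE solution $h$ encodes the accepting computations of a fixed $\classPSPACE$-complete problem, and to upgrade the smoothness of $g$ to $\classC^{(\infty,1)}$ while preserving both its polynomial-time computability and the hardness of $h$. In that construction $g$ is assembled piecewise on a dyadic subdivision of the $t$-axis: on each dyadic window $g$ reads the current value $h(t) \in [-1,1]$, interprets it as a configuration of a $\classPSPACE$ Turing machine, and drives $h$ toward the encoding of the successor configuration. Non-smoothness appears at window boundaries in $t$ and at plateau boundaries in $y$; these are the two sources I have to remove.

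First I would smooth the $t$-dependence entirely. Every characteristic function of a dyadic time window is replaced by a rescaled copy of a fixed $\classC^\infty$ bump~$\varphi$, arranged into a smooth partition of unity. Because the number of windows active at depth $n$ is polynomial in $n$ and the derivatives of the rescaled bumps grow only polynomially in $n$, the resulting $g$ is $\classC^\infty$ in $t$ and uniformly polynomial-time computable together with all its $t$-derivatives. The values of $h$ sampled at the midpoints of the computation windows, where the encoding is recorded, are affected only by an exponentially small perturbation.

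Next I would smooth the $y$-dependence to obtain one continuous $y$-derivative. The original $g$ is already Lipschitz in $y$; replacing its piecewise-linear attractor profile by a $\classC^1$ interpolation (for instance a cubic with prescribed values and derivative zero at each plateau) produces a $g$ whose $\partial g/\partial y$ is continuous with a comparable Lipschitz constant. Polynomial-time computability is preserved because this modification is a local closed-form change on intervals of length $2^{-\Theta(n)}$, evaluable from the bit encoding of $y$.

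The main obstacle is to show, by a Gronwall-type stability estimate, that the two smoothings together do not drift the solution of the modified ODE away from that of the original piecewise ODE by more than the decoding margin. This reduces to bounding the Lipschitz constant of $g$ in $y$ and the sup-norm of the difference between the smoothed and the original $g$ by polynomials in $n$, so that the accumulated error over $[0,1]$ stays within tolerance at every output time. The reason the argument halts at $\classC^{(\infty,1)}$ and not at higher $y$-smoothness is that a polynomial bound on $\partial^2 g/\partial y^2$ would cap the speed of the attractor dynamics, forcing a polynomial slow-down that weakens the hardness bound from $\classPSPACE$ down to the counting hierarchy, as treated separately in Theorem~\ref{KTimesIsCH}.
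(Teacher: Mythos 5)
Your strategy differs genuinely from the paper's. You propose to \emph{perturb} Kawamura's Lipschitz construction (mollify in $t$, cubic-spline in $y$) and then argue by a Gronwall estimate that the solution drifts by less than the decoding margin. The paper instead builds a fresh family $(g_u,h_u)_u$ directly, encoding the levels of a \emph{difference equation} in the base-$B$ digits of $h_u$, using Ko's $\classC^\infty$ step function $f$ (with all derivatives zero at $0$ and $1$, and polynomially bounded higher derivatives) to interpolate in both variables, so that $\D h_u = g_u(\cdot,h_u)$ holds \emph{exactly} by construction and no stability analysis is needed. The degree of $y$-smoothness attainable is then a consequence of how sparsely the digit-positions $d_u(i)$ must be spaced to make the derivative bound~(\ref{enum:smooth}) hold; a linear $d_u(i)=i$ gives $\classC^{(\infty,1)}$ with polynomially many levels ($\classPSPACE$), while exponential spacing forces logarithmically many levels, hence $\classCH$.

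There are concrete gaps in your sketch. (1) The claim that the derivatives of the rescaled bump functions ``grow only polynomially in $n$'' is false: windows of width $2^{-\Theta(n)}$ force bump derivatives of size $2^{\Theta(n)}$, and the paper must compensate exactly for this by making $g_u$ itself exponentially small (the role of $\gamma$ in condition~(\ref{enum:smooth})). (2) The sup-norm difference between the original and the mollified $g$ near a window boundary is $\Theta(1)$, not small, so the uniform-perturbation Gronwall bound $|h-\tilde h|\le e^{L}\|g-\tilde g\|_\infty$ you invoke does not close; you would need the $L^1$-in-time version, plus an argument that the accumulated drift never pushes $\tilde h$ across a decoding threshold during any of the exponentially many windows, and neither is supplied. (3) Nothing in the sketch addresses continuity of $\D_1^i\D_2 g$ at the accumulation point $t=1$, where infinitely many rescaled copies pile up; this is precisely where the paper needs the estimate~\eqref{eq:sizeofderivative} and is the reason $\gamma$ must dominate both $\mu$ and the scaling factor $\Lambda_u^{i+j}$. (4) Your heuristic for why $\classC^{(\infty,1)}$ is the limit (a cap on $\partial^2 g/\partial y^2$ ``slows the attractor'') is not what actually happens: the real obstruction is that bounding $\D^{(i,j)}g_u$ for $j\ge 2$ requires $d_u(i+1)>k\,d_u(i)$, which forces exponential spacing and hence logarithmic height.
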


The complexity notions (computability and hardness) in this and the following theorems 
will be explained in Section~\ref{section: preliminaries}. 
When $g$ is more than once differentiable, 
we did not quite succeed in proving that $h$ is $\classPSPACE$-hard
in the same sense, 
but we will prove it $\classCH$-hard, 
where $\classCH \subseteq \classPSPACE$ is the 
counting hierarchy (see Section~\ref{section: counting hierarchy}): 

 \begin{theorem}
  \label{KTimesIsCH}
Let $k$ be a positive integer. 
There is a polynomial-time computable function
$g \colon [0,1] \times [-1,1] \to \R$ 
of class $\classC ^{(\infty, k)}$ such that
the equation \eqref{eq:ode} has a 
$\classCH$-hard solution $h \colon [0, 1] \to \R$. 
 \end{theorem}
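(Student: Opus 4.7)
The plan is to adapt the construction behind Theorem~\ref{DifferentiableIsPspace} so that $g$ becomes $k$ times continuously differentiable in its second argument, paying for this extra smoothness by weakening the hardness conclusion from $\classPSPACE$ to $\classCH$.

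As in the proof of Theorem~\ref{DifferentiableIsPspace}, I would partition $[0,1]$ into time slabs $I_n$ shrinking toward $1$, and design $g$ so that within each slab the function $h$ accumulates a value encoding the answer to an $n$-bit instance of a hard problem. The critical building blocks of $g$ as a function of the second variable $y$ are threshold-style gadgets that react sharply when a specific bit of $h(t)$ flips. For Theorem~\ref{DifferentiableIsPspace} these gadgets are $\classC^1$; to obtain class $\classC^{(\infty, k)}$, I would replace them by $\classC^k$ analogues obtained by repeated integration of a smooth bump, and rescale them in $y$ so that all derivatives up to order $k$ stay polynomially bounded. This keeps $g$ polynomial-time computable, bounded, and of class $\classC^{(\infty, k)}$.

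The price of this smoothing is that the wider $y$-support of the new gadgets reduces how much information can be encoded per slab: each reading of a bit of $h(t)$ now consumes a polynomially large $y$-window, so each slab can implement only one layer of counting over a polynomial-time predicate rather than a full polynomial-space step. This matches precisely the inductive definition of $\classCH$ recalled in Section~\ref{section: counting hierarchy}, and I would reduce from a $\classCH$-hard problem by grouping the slabs $I_n$ so that each group realizes one level of counting quantifiers over polynomial-time instances, with the accumulated value of $h$ near $t = 1$ encoding the final answer.

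The main obstacle will be the joint verification of smoothness and correctness. On the smoothness side, one must check that $\D^{(0, k)} g$ remains continuous across all joins in $t$ and across the boundaries of the rescaled $\classC^k$ gadgets in $y$. On the correctness side, one must show that the errors that the smoothed thresholds introduce into $h$ stay strictly below the tolerance of subsequent readings, so that the iterated counting computation survives along the entire sequence of slabs. Both amount to a quantitative refinement of the error analysis underlying Theorem~\ref{DifferentiableIsPspace}, but the refinement must now be uniform in $k$ and tight enough for the polynomial bounds on derivatives and on the size of the encoded computation to match.
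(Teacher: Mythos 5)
Your high-level intuition---smoother $g$ forces wider threshold gadgets, limiting the computation and dropping the hardness from $\classPSPACE$ to $\classCH$---is the right theme, but the proposal is missing the quantitative mechanism that makes it go. The paper works through a discrete intermediary: \emph{difference equations} with a notion of \emph{height}, the number of read-and-feedback levels. The key discrete lemma (Lemma~\ref{DIVPlogIsCHhard}) shows that uniform difference equations of \emph{logarithmic} height already recognize a $\classCH$-hard language, by encoding the $\quantC_n\classP$-complete problem $\quantC_n B_{\mathrm{be}}$ of Lemma~\ref{lemma:CnP-complete} one counting quantifier per level. Your proposal gestures at the inductive definition of $\classCH$ but names no concrete hard problem and does not explain why ``one layer of counting per slab'' composes to exactly $\classCH$ rather than something weaker or stronger.

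The critical missing piece is the exponential digit spacing, which is what actually enforces the logarithmic bound. The paper stores the level-$i$ state $H_u(i,T)$ in the $d_u(i)$th base-$B$ digit of $h_u(t)$; keeping $\D^{(i,j)}g_u$ bounded for $j \le k$ requires $d_u(i+1) > k\, d_u(i)$ (see the estimate \eqref{equation: upperbound of differences}), which forces $d_u(i) = (k+1)^i$. Since the deepest position $d_u(p(|u|))$ must stay polynomial in $|u|$ for $h_u(1)$ to be readable with polynomial precision, the number of levels must be $\mathrm O(\log |u|)$. \emph{This} geometric compounding, not a fixed ``polynomially large $y$-window'' per read, is why the construction lands at $\classCH$. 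Note also that the paper keeps all levels for one string $u$ \emph{inside a single time slab}, encoded in simultaneous digits of $h$, and the reduction queries $h$ at the slab midpoint $c_u$. Spreading levels across a group of slabs, as you suggest, would require the information to be passed from slab to slab through the value of $h$ alone, reintroducing the very read-off precision problems the digit encoding is designed to manage, and would tend to yield hardness only of the final value $h(1)$ (as in Section~\ref{section: final value}) rather than of $h$ as a function.
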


In Theorems \ref{DifferentiableIsPspace} and \ref{KTimesIsCH}, 
we said
$g \colon [0,1] \times [-1, 1] \to \R$ instead of 
$g \colon [0,1] \times \R \to \R$, because
the notion of polynomial-time computability of real functions 
in this paper is defined only when the domain is a bounded closed region.%
\footnote{%
Although we could extend our definition to 
functions with unbounded domain~%
\cite[Section~4.1]{kawamura2010operators}, 
the results in Table~\ref{table:related} 
do not hold as they are, 
because polynomial-time compubable functions $g$, 
such as $g (t, y) = y + 1$, 
could yield functions~$h$, such as $h (t) = \exp t - 1$, 
that grow too fast to be polynomial-time (or even polynomial-space) computable. 
Bournez, Gra\c ca and Pouly~%
\cite[Theorem~2]{bournez11:_solvin_analy_differ_equat_in}
report that the statement about the analytic case holds true 
if we restrict the growth of $h$ 
appropriately. 
} 
This makes the equation~\eqref{eq:ode} ill-defined 
in case $h$ ever takes a value outside $[-1, 1]$. 
By saying that $h$ is a solution in Theorem~\ref{DifferentiableIsPspace}, 
we are also claiming that 
$h (t) \in [-1, 1]$ for all $t \in [0, 1]$. 
This is no essential restriction, 
because any pair of functions 
$g \colon [0,1] \times \R \to \R$ and $h \colon [0, 1] \to \R$
satisfying the equation
could be scaled down in an appropriate way 
(without affecting the computational complexity)
to make $h$ stay in $[-1, 1]$. 
In any case, 
since we are making stronger assumptions on $g$ than Lipschitz continuity, 
the solution $h$, if it exists, is unique. 

Whether smoothness of the input function 
reduces the complexity of the output
has been studied for operators other than solving differential equations, 
and the following negative results are known. 
The integral of a polynomial-time computable real function 
can be $\classNumberP$-hard, and this does not change 
by restricting the input to 
$\classC ^\infty$ (infinitely differentiable) functions
\cite[Theorem~5.33]{ko1991complexity}. 
Similarly, the function obtained by maximization 
from a polynomial-time computable real function 
can be $\classNP$-hard, and this is still so
even if the input function is restricted to $\classC ^\infty$ 
\cite[Theorem~3.7]{ko1991complexity}%
\footnote{%
In the last part of the proof of this fact in 
the book \cite[Theorem 3.7]{ko1991complexity}, 
the definition of $f$ needs to be replaced by, e.g., 
\[f(x) = 
\begin{cases}
 u_s & \text{if not } R(s,t), \\
 u_s + 2^{-(p(n)+2n+1)\cdot n} \cdot h_1(2^{p(n)+2n+1} (x - y_{s,t})) & \text{if } R(s,t). 
\end{cases}\]
}. 
(Restricting to analytic inputs 
renders the output polynomial-time computable, 
again by the argument based on Taylor series.)
In contrast, for the differential equation
we only have Theorem~\ref{KTimesIsCH} for each $k$, 
and do not have any hardness result 
when $g$ is assumed to be infinitely differentiable. 

Theorems \ref{DifferentiableIsPspace} and \ref{KTimesIsCH} 
are about the complexity of each solution $h$. 
We will also discuss
the complexity of the final value $h(1)$ and
the complexity of the operator that maps $g$ to $h$; 
see Sections \ref{section: final value} and \ref{section: constructive}. 

\paragraph{Notation}
Let $\N$, $\Z$, $\Q$, $\R$ denote the set of natural numbers,
integers,
rational numbers and 
real numbers, respectively.

We assume that any polynomial is increasing,
since it does not change the meaning of 
polynomial-time computable or polynomial-space computable.

Let $A$ and $B$ be bounded closed intervals in $\R$.
We write $|f| = \sup_{x \in A} f(x)$ for $f \colon A \to \R$.
A function $f \colon A \to \R$ is of \emph{class $\classC^i$}
(or \emph{$i$-times continuously differentiable})
if all the derivatives $\D f, \D^2 f, \dots, \D^i f$ exist and are continuous.

For a function $g$ of two variables, 
we write $\D _1 g$ and $\D _2 g$ for the derivatives of $g$ 
with respect to the first and the second variable,
respectively, 
when they exist.
A function $g \colon A \times B \to \R$ is of \emph{class $\classC^{(i, j)}$}
if for each $m \in \{0, \dots, i\}$ and $n \in \{0, \dots j\}$,
the derivative $\D_1^m \D_2^n g$ exists and is continuous.
This derivative $\D_1^m \D_2^n g$ is then written $\D ^{(m, n)} g$
(and is known to equal $\D _{e_1} \ldots \D _{e _{m + n}} g$ for any 
sequence $e _1 \ldots e _{m + n}$ of $m$ $1$s and $n$ $2$s). 
A function $g$ is of \emph{class $\classC^{(\infty, j)}$}
if it is of class $\classC^{(i, j)}$ for all $i \in \N$.

\section{The Counting Hierarchy}
\label{section: counting hierarchy}

We assume that the reader is familiar with 
the basics of complexity theory, 
such as the classes $\classP$ and $\classPSPACE$
and the notions of reduction and hardness 
\cite{wegener-book}. 
We brief\textcompwordmark ly review the class~$\classCH$. 

The counting hierarchy $\classCH$ is defined analogously to 
the polynomial-time hierarchy $\classPH$ \cite[Kapitel~10.4]{wegener-book} using 
oracle machines corresponding to the class $\classPP$ \cite[Kapitel~3.3]{wegener-book}
instead of $\classNP$: 
thus, $\classCH = \bigcup _{n \in \N} \quantC _n \classP$, 
where each level $\quantC _n \classP$ is defined inductively by $
 \quantC_0 \classP  = \classP 
$ and $
 \quantC_{n+1} \classP = \classPP^{\quantC_n \classP}
$. 
We leave the details of this definition to 
the original papers~%
\cite{wagner1986complexity,parberry-schnitger,toran1991complexity}. 
All we need for our purpose is the fact (Lemma~\ref{lemma:CnP-complete} below) 
that $\quantC _n \classP$ 
is characterized by the following complete problem $\quantC_n B_{\mathrm{be}}$:
given $n$ lists $X _1$, \ldots, $X _n$ of propositional variables, 
a propositional formula $\phi (X _1, \dots, X _n)$ with all free variables listed, 
and numbers $m _1$, \ldots, $m _n$ in binary, 
determine whether 
\begin{equation}
 \quantC^{m_n}{X_n} \ldots \quantC^{m_1}{X_1} \phi(X_1, \dots, X_n)
\end{equation}
is true.  Here, $\quantC ^m$ is the \emph{counting quantifier}: 
for every formula $\phi(X)$ with the list $X$ of $l$ free variables,
we write 
\begin{equation}
 \quantC^m X \phi(X) 
  \iff 
  \sum_{x \in \{0,1\}^l} \phi(x) \ge m,
\end{equation}
where formula $\phi$ is identified with the function 
$\phi \colon \{0,1\}^l \to \{0,1\}$
such that $\phi(X) = 1$ when $\phi(X)$ is true.

\begin{lemma}[{\cite[Theorem 7]{wagner1986complexity}}] \label{lemma:CnP-complete}
 For every $n \ge 1$, 
 the problem $\quantC_n B_{\mathrm{be}}$ is $\quantC_n\classP$-complete.
\end{lemma}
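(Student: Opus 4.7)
The plan is to prove both directions by induction on $n$, using a uniform normal-form characterization of $\quantC_n\classP$ by nested counting quantifiers. Specifically, I would prove inductively that $L\in\quantC_n\classP$ if and only if there exist a polynomial-time predicate $R$, polynomials $p_1,\ldots,p_n$, and polynomial-time-computable thresholds $m_1(w),\ldots,m_n(w)$ (output in binary) such that $w\in L\iff\quantC^{m_n(w)}X_n\cdots\quantC^{m_1(w)}X_1\,R(w,X_1,\ldots,X_n)$, with $|X_i|=p_i(|w|)$. Once this normal form is in hand, the hardness reduction is immediate: applying Cook--Levin to the predicate $R(w,\cdot)$ yields a polynomially-sized propositional formula $\phi_w(X_1,\ldots,X_n)$ satisfying $\phi_w(x)=R(w,x)$, so $w\mapsto(\phi_w,m_1(w),\ldots,m_n(w))$ is a polynomial-time many-one reduction from $L$ to $\quantC_nB_{\mathrm{be}}$.

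Membership of $\quantC_nB_{\mathrm{be}}$ in $\quantC_n\classP$ would then proceed by the converse induction. For $n=1$, an instance $(\phi,m_1)$ asks whether $\phi$ has at least $m_1$ satisfying assignments, which a $\classPP$ machine decides by the standard padding trick of adjoining enough dummy variables to convert the arbitrary threshold $m_1$ into the majority threshold required by the definition of $\classPP$. For the inductive step, an outer $\classPP$ machine guesses $x_n\in\{0,1\}^{|X_n|}$ and then queries a $\quantC_{n-1}\classP$ oracle --- inductively representable by $\quantC_{n-1}B_{\mathrm{be}}$ itself --- on the reduced instance $(\phi(\cdot,x_n),m_1,\ldots,m_{n-1})$; the number of accepting guesses crosses the (padded) majority threshold exactly when the original instance is true.

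The main obstacle is the inductive step of the normal-form proof: given $L\in\classPP^{\quantC_{n-1}\classP}$, one must fold both the outer $\classPP$ machine and the inductively given inner counting-quantifier expression for the oracle into a single $\quantC_n$-prefix. A Cook--Levin-style encoding converts the outer machine into one counting quantifier $\quantC^{m_n}X_n$ ranging over its computation paths, but each oracle query along such a path must be replaced by its inductive counting-quantifier expansion without spuriously inflating the outer count. My plan here is to additionally guess the vector of oracle answers along each path $X_n$, to verify every claimed answer by the inductive $\quantC^{m_{n-1}}\cdots\quantC^{m_1}$ expression, and to pad each inner quantifier with fresh dummy variables so that every query contributes a fixed, predictable number of witnesses to the total count regardless of whether its answer is ``yes'' or ``no''. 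Making this bookkeeping align so that the composite count crosses the outer threshold $m_n$ precisely when $w\in L$ is the technical heart of the argument and is exactly what \cite[Theorem~7]{wagner1986complexity} establishes.
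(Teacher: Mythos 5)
The paper gives no proof of this lemma: it is imported verbatim from Wagner~\cite{wagner1986complexity} and used as a black box, so there is nothing internal to compare against. Your sketch lays out the standard route --- a normal-form characterization of $\quantC_n\classP$ by nested counting quantifiers proved by induction, hardness from the normal form via Cook--Levin, membership by padding a $\classPP$ machine up to the majority threshold --- and correctly isolates the hard step, namely absorbing the oracle queries of a $\classPP^{\quantC_{n-1}\classP}$ machine into one additional outer counting quantifier. You then hand exactly that step back to Wagner, so what you have is an annotated citation rather than a proof.

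If you wanted the argument self-contained, two points in the guess-and-verify-and-pad plan need substantive work. First, verifying a claimed \emph{negative} oracle answer presupposes closure of $\quantC_{n-1}\classP$ under complement; this does follow from the counting-quantifier normal form (replace $\geq m$ by $\geq 2^{l}-m+1$ for the negated inner predicate), but it must be carried along in your induction as an explicit invariant. Second, and more seriously, the step ``pad each inner quantifier so that every query contributes a fixed number of witnesses'' is not automatic: counting quantifiers assert thresholds, not exact counts, so a correct ``yes'' guess may be certified by anywhere between $m_i$ and $2^{l_i}$ inner assignments, and a path that guesses an oracle answer incorrectly can still pick up a variable number of spurious inner witnesses. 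Making the composite count cross the outer threshold $m_n$ precisely when the original $\classPP$ machine accepts therefore requires a nontrivial weighting argument in which the contribution of each outer path depends only on whether its oracle guesses are all correct and whether the path accepts, not on how many inner witnesses happen to exist. That balancing is the actual content of Wagner's Theorem~7 (see also Tor\'an~\cite{toran1991complexity}), and it is exactly the part your proposal leaves open.
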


By this, we mean that 
any problem $A$ in $\quantC_n \classP$ reduces to $\quantC_n B_{\mathrm{be}}$ 
via some polynomial-time function $F$ 
in the sense that $v \in A$ if and only if $F (v) \in \quantC_n B_{\mathrm{be}}$. 

Note that this is analogous to the polynomial hierarchy $\classPH$, 
whose $n$th level $\classSigma ^p _n$ has the complete problem
that asks for the value of a formula of the form $
\exists X _n \forall X _{n - 1} \exists X _{n - 2} \ldots \allowbreak \phi (X_1, \dots, X_n)
$ with $n$ alternations of universal and existential quentifiers. 
Note also that the counting quantifier generalizes 
the universal and existential quentifiers, and hence $\classPH \subseteq \classCH$ -- 
in fact, it is known~\cite{toda} that 
$\classPH$ is contained already in $\classP ^{\classPP}$
and thus in the second level of the counting hierarchy. 

\section{Computational Complexity of Real Functions}
\label{section: preliminaries}

We review some complexity notions 
from Computable Analysis~%
\cite{ko1991complexity,weihrauch00:_comput_analy}. 
We start by fixing an encoding of real numbers 
by string functions.

\begin{definition}
 A function $\phi \colon \{0, 1\} ^* \to \{0, 1\}^*$ is a \emph{name} of a real number $x$ 
 if for all $n \in \N$,
 $\phi(0^n)$ is the binary representation of $\lfloor x \cdot 2^n \rfloor$ or
 $\lceil x \cdot 2^n \rceil$,
 where $\lfloor \cdot \rfloor$ and $\lceil \cdot \rceil$ mean
 rounding down and up to the nearest integer.
 \end{definition}

In effect, a name of a real number $x$ takes each string $0 ^n$ 
to an approximation of $x$ with precision $2 ^{-n}$.

We use \emph{oracle Turing machines} (henceforth just \emph{machines})
to work on these names (Fig.~\ref{fig:model-of-function}).
Let $M$ be a machine and $\phi$ be a function from strings to strings. 
We write $M ^\phi (0 ^n)$ for the output string 
when $M$ is given
$\phi$ as oracle and string $0^n$ as input.
Thus we also regard $M^\phi$ as a function from strings to strings.

\begin{figure}
 \begin{center}
  \includegraphics[height=0.17\textheight]{./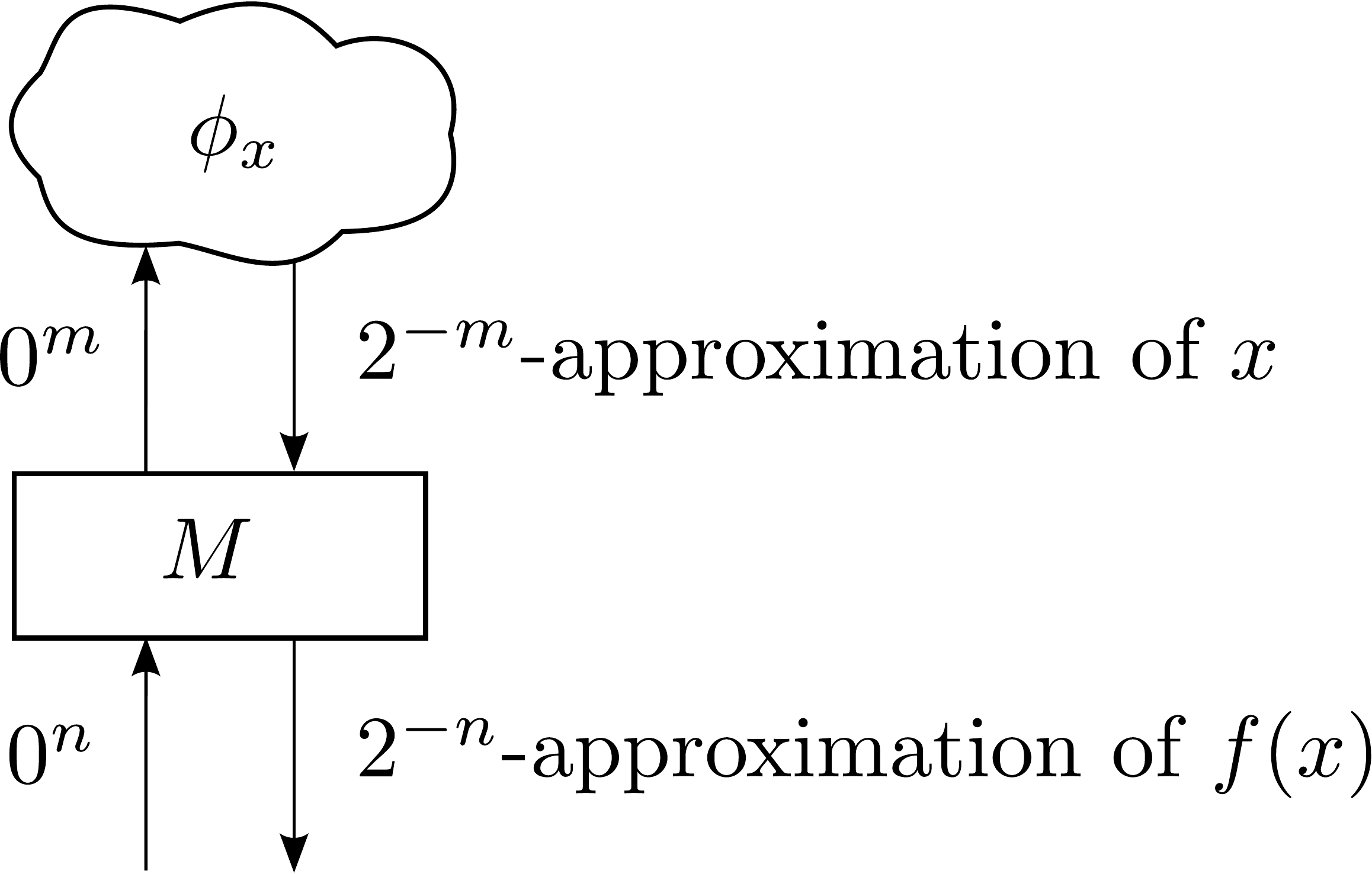}
 \end{center}
 \caption{A machine $M$ computing a real function $f$}
 \label{fig:model-of-function}
\end{figure}

\begin{definition}
Let $A$ be a bounded closed interval of\/ $\R$.
A machine $M$ \emph{computes} a real function $f \colon A \to \R$ 
if for any $x \in A$ and any name $\phi_x$ of it,
$M^{\phi_x}$ is a name of $f(x)$.
\end{definition}

Computation of a function $f \colon A \to \R$ on
a two-dimensional bounded closed region $A \subseteq \R ^2$ 
is defined similarly using machines with two oracles.

A real function is (\emph{polynomial-time}) \emph{computable} if there exists some machine that computes it (in polynomial time).
Polynomial-time computability of a real function $f$ means that
for any $n \in \N$, 
an approximation of $f(x)$ with error bound $2^{-n}$
is computable in time polynomial in $n$ 
independent of the real number $x$.

By the time the machine outputs the approximation of $f (x)$ of precision~$2 ^{-n}$, 
it knows $x$ only with some precision $2 ^{-m}$.
This implies that all computable real functions are continuous.
If the machine runs in polynomial time,
this $m$ is bounded polynomially in $n$.
This implies \eqref{eq:modulus} in the following lemma, 
which characterizes polynomial-time real functions
by the usual polynomial-time computability of string functions 
without using oracle machines. 

\begin{lemma}
 \label{lem:type1representation}
 A real function is polynomial-time computable if and only if
 there exist a polynomial-time computable function 
 $\phi \colon (\Q \cap [0, 1]) \times \{0, 1\} ^* \to \Q$ and 
 polynomial $p \colon \N \to \N$ such that
 for all $d \in \Q \cap [0,1]$ and $n \in \N$,
 \begin{equation}
  |\phi(d, 0^n) - f(d)| \le 2^{-n},
 \end{equation}
 and for all $x, y \in [0, 1]$, $n \in \N$,
 \begin{equation} 
  |x-y| \le 2^{-p(n)} \Rightarrow |f(x) - f(y)| \le 2^{-n},
   \label{eq:modulus}
 \end{equation}
where each rational number is written
as a fraction whose numerator and denominator
are integers in binary.
\end{lemma}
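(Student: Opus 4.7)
The plan is to prove the two directions of Lemma~\ref{lem:type1representation} by translating between the oracle-machine model of polynomial-time real computation and the rational-input model.

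For the forward direction, suppose a machine $M$ computes $f$ in time $q(n)$ for some polynomial $q$. To define $\phi$, given a rational $d \in \Q \cap [0,1]$ (written as a fraction in binary) and a string $0^n$, I would simulate a name $\phi_d$ of $d$ on the fly: whenever the simulated run of $M$ queries the oracle at $0^k$, return the binary representation of $\lfloor d \cdot 2^k \rfloor$, which is computable in time polynomial in $k$ and the bit length of $d$. The resulting output $z := M^{\phi_d}(0^n)$ satisfies $|z/2^n - f(d)| \le 2^{-n}$, so setting $\phi(d, 0^n) := z/2^n$ gives a polynomial-time computable rational approximation. For the modulus inequality~\eqref{eq:modulus}, the key observation is that if $|x-y| \le 2^{-k-1}$, then for every $j \le k$ the rounding sets $\{\lfloor x \cdot 2^j \rfloor, \lceil x \cdot 2^j \rceil\}$ and $\{\lfloor y \cdot 2^j \rfloor, \lceil y \cdot 2^j \rceil\}$ share at least one element; hence some string function is simultaneously a valid name of $x$ and of $y$ on all inputs of length at most $k$. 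Taking $k := q(n+1)$, so that $M$ on input $0^{n+1}$ only queries strings of length $\le k$, forces $M$ to produce the same output $z$ for both $x$ and $y$. Combining $|z/2^{n+1} - f(x)| \le 2^{-(n+1)}$ with the analogous inequality for $y$ yields $|f(x) - f(y)| \le 2^{-n}$, so $p(n) := q(n+1)+1$ is a valid modulus polynomial.

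For the converse, I would design $M$ so that on oracle access to a name $\phi_x$ of $x \in [0,1]$ and input $0^n$ it proceeds as follows: query $\phi_x(0^{p(n+3)})$ to obtain a dyadic rational $d$ with $|d-x| \le 2^{-p(n+3)}$ (truncating to $[0,1]$ if necessary, which costs at most another $2^{-p(n+3)}$ in accuracy); by the modulus property, $|f(d) - f(x)| \le 2^{-(n+3)}$. Next compute $b := \phi(d, 0^{n+3})$, giving $|b - f(d)| \le 2^{-(n+3)}$ and hence $|b - f(x)| \le 2^{-(n+2)}$. Finally, let $z$ be the nearest integer to $b \cdot 2^n$ and output its binary representation. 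Then $|z - f(x) \cdot 2^n| \le 1/2 + 1/4 < 1$, which forces $z \in \{\lfloor f(x) \cdot 2^n \rfloor, \lceil f(x) \cdot 2^n \rceil\}$, as demanded by the definition of a name. The bit length of $d$ and the cost of computing $\phi(d, 0^{n+3})$ are polynomial in $n$, so $M$ runs in polynomial time.

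The main delicacy is the forward-direction modulus argument: verifying that arbitrary sufficiently close reals admit a common oracle name on all short-enough inputs, and that an oracle machine running in time $q(n+1)$ cannot distinguish them. The minor remaining issues -- the boundary behaviour when $d$ leaves $[0,1]$ in the converse direction, and the rounding to a valid integer output -- are handled by simple bookkeeping.
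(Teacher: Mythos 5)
Your argument is correct and matches the standard proof (Ko's book), of which the paper only sketches the forward-direction modulus observation in the paragraph preceding the lemma. Both the forward direction (synthesizing an oracle name for a rational input, and the counting that a time-$q(n{+}1)$ machine only queries precisions up to $q(n{+}1)$ so that $p(n)=q(n{+}1)+1$ works) and the converse (query precision $p(n{+}3)$, evaluate $\phi$, round) are sound and are what the paper implicitly relies on.
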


To speak about hardness of a real function, 
we define the notion of a language to it. 
A language $L \subseteq \{0, 1\} ^*$ is identified with the function 
$L \colon \{0, 1\} ^* \to \{0, 1\}$ such that $L (u) = 1$ when $u \in L$.

\begin{definition} \label{definition: reduction}
 A language $L$ \emph{reduces} to a function $f \colon [0, 1] \to \R$
 if there exist a polynomial-time function $S$ and 
 a polynomial-time oracle Turing machine $M$ (Fig.~\ref{fig:reduction})
 such that for any string $u$, 
  \begin{enumerate}
   \item $S (u, \cdot)$ is a name of a real number $x _u$, and 
   \item for any name $\phi$ of $f (x _u)$, we have that 
    $M ^\phi (u)$ accepts if and only if $u \in L$.
  \end{enumerate}
\end{definition}
This reduction may superficially look stronger (and hence the reducibility weaker) than
the one in Kawamura~\cite{kawamura2010lipschitz} 
in that $M$ can make multiple queries adaptively, 
but it is not hard to see that this makes no difference. 

For a complexity class~$C$, a function $f$ is \emph{$C$-hard}
if all languages in $C$ reduce to $f$.

 \begin{figure}
  \begin{center}
  \includegraphics[scale=0.25]{./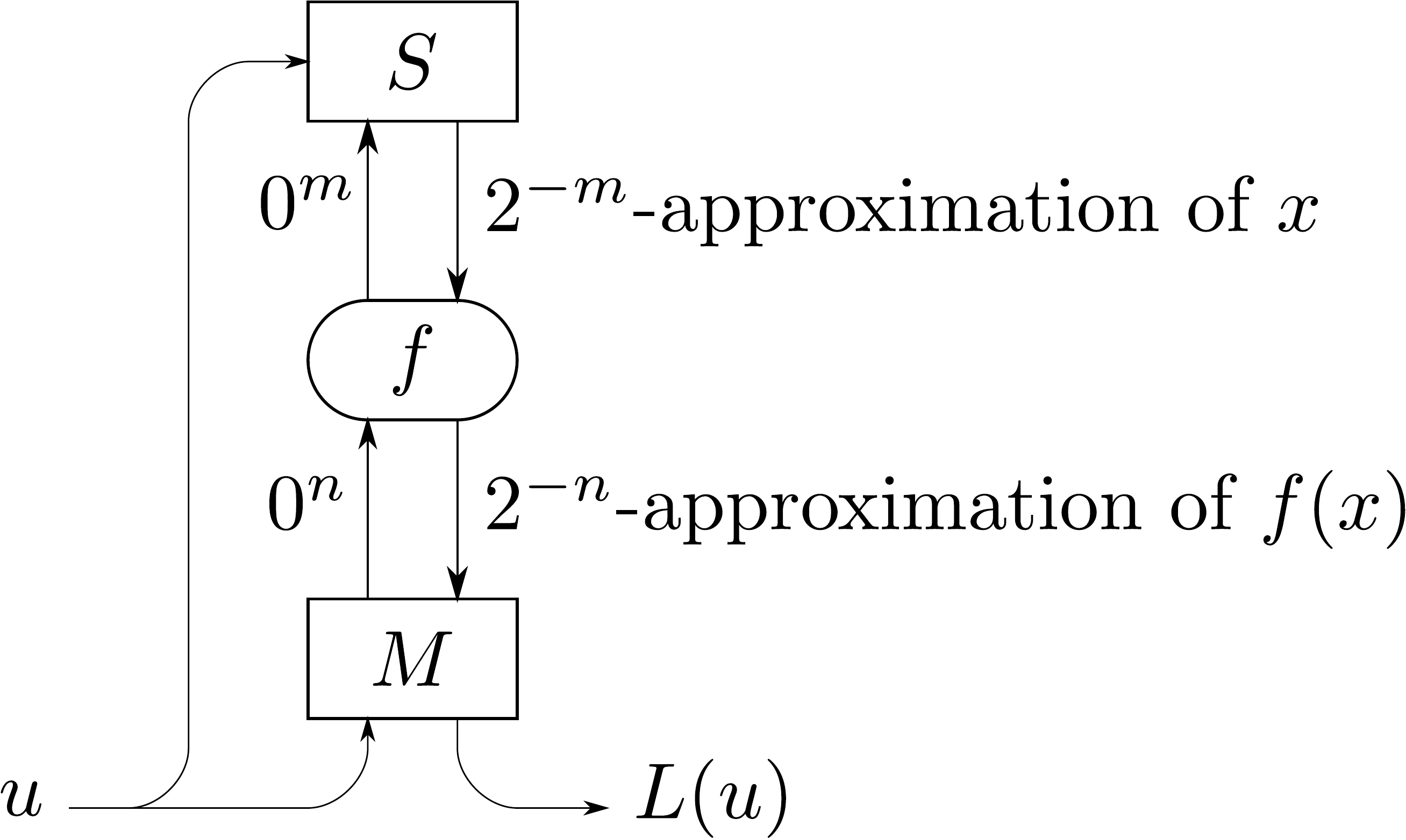}
  \caption{Reduction from a language $L$ to a function $f \colon [0,1] \to \R$}
  \label{fig:reduction}
  \end{center}
 \end{figure}

\section{Proof of the Main Theorems}
\label{section:differentiable}

The proofs of Theorems \ref{DifferentiableIsPspace} and \ref{KTimesIsCH}
proceed as follows. 
In Section~\ref{section:divp}, 
we define \emph{difference equations}, 
a discrete version of the differential equations. 
We show that these equations with certain restrictions 
are $\classPSPACE$- and $\classCH$-hard. 
In Section~\ref{subsection: ode family}, 
we show that these classes of difference equations can be simulated
by families of differential equations 
satisfying certain uniform bounds on higher-order derivatives. 
In Section~\ref{subsection: proof of theorems}, 
we prove Theorems \ref{DifferentiableIsPspace} and \ref{KTimesIsCH} by
putting these families of functions together 
to obtain one differential equation 
with the desired smoothness 
($\classC ^{(\infty, 1)}$ and $\classC ^{(\infty, k)}$). 

The idea of simulating a discrete system with limited feedback 
by differential equations
was essentially already present in the proof of 
the Lipschitz version of these results \cite{kawamura2010lipschitz}. 
We look more closely at this limitation on feedback, 
and express it as a restriction on the \emph{height}
of the difference equation. 
We show that a stronger height restriction allows
the difference equation to be simulated by 
smoother differential equations 
(see the proof of Lemma~\ref{KTimesFamily} and the discussion after it), 
leading to the $\classCH$-hardness for $\classC ^{(\infty, k)}$ functions.

\subsection{Difference equations}
\label{section:divp}

In this section, we define difference equations, 
which are a discrete version of differential equations. 
Then we show the $\classPSPACE$- and $\classCH$-hardness of 
families of difference equations with different height restrictions. 

Let $[n]$ denote $\{0, \dots , n-1\}$.
Let $G \colon [P] \times [Q] \times [R] \to \{-1, 0, 1\}$ and
$H \colon [P + 1] \times [Q+1] \to [R]$. 
We say that $H$ is the solution of the \emph{difference equation} given by $G$
if for all $i \in [P]$ and $T \in [Q]$ (Fig.~\ref{fig:divp}), 
\begin{gather}
   H(i, 0) = H(0, T) = 0, \label{eq:initial value}
\\
   H(i + 1, T + 1) - H(i+1, T) = G(i, T, H(i, T)).  \label{eq:divp}
\end{gather}
We call $P$, $Q$ and $R$ the \emph{height}, \emph{width} and \emph{cell size} of
the difference equation.
The equations \eqref{eq:initial value} and \eqref{eq:divp} are similar to 
the initial condition $h(0) = 0$ and the equation $\D h(t) = g(t, h(t))$ 
in \eqref{eq:ode}, respectively.
In Section~\ref{subsection: ode family}, 
we will use this similarity to simulate difference equations by differential equations.

\begin{figure}
 \begin{center}
  \includegraphics[height=0.15\textheight]{./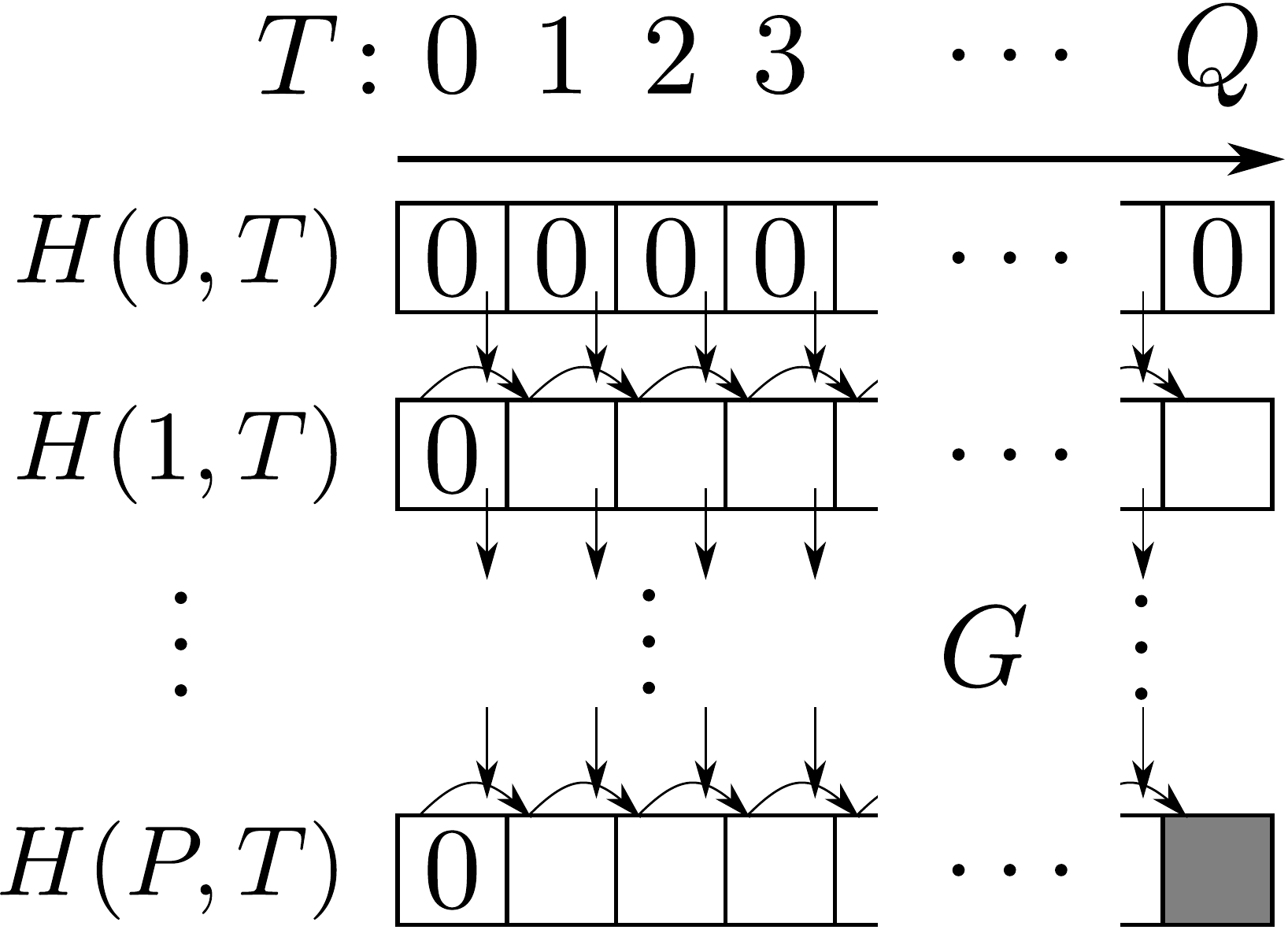}
 \end{center}
 \caption{The solution $H$ of the difference equation given by $G$}
 \label{fig:divp}
\end{figure}

We view a family of difference equations as a computing system by
regarding the value of the bottom right cell (the gray cell in Fig.~\ref{fig:divp}) as the output. 
A family $(G_u)_u$ of functions 
$G_u \colon [P_u] \times [Q_u] \times [R_u] \to \{-1, 0, 1\}$
\emph{recognizes} a language $L$ if for each $u$,
the difference equation given by $G_u$ has a solution $H_u$ 
and $H_u(P_u, Q_u) = L(u)$.
A family $(G_u)_u$ is \emph{uniform} 
if the height, width and cell size of $G_u$ are 
polynomial-time computable from $u$ (in particular, 
they must be bounded by $2^{p(|u|)}$, for some polynomial~$p$)
and $G_u(i, T, Y)$ is polynomial-time computable from $(u, i, T, Y)$.
A family $(G_u)_u$ has \emph{polynomial height} 
(and \emph{logarithmic height}, respectively)
if the height $P_u$ is bounded by $|u| ^{\mathrm O (1)}$
(and by $\mathrm O (\log |u|)$, respectively).
With this terminology,
the key lemma in 
\cite[Lemma 4.7]{kawamura2010lipschitz} 
can be written as follows:
\begin{lemma}
 \label{DIVPpolyIsPSPACEhard}
 There exists a $\classPSPACE$-hard language $L$ that is recognized by some uniform family of functions with polynomial height%
 \footnote{In fact, 
this language~$L$ is in $\classPSPACE$, 
because a uniform family with polynomial height 
can be simulated in polynomial space.
}.
\end{lemma}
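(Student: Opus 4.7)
The plan is to reduce TQBF, a standard \classPSPACE-complete problem, to the final-cell value of a uniform family of difference equations. The guiding observation is that $\classPSPACE = \mathsf{AP}$ and that an alternating polynomial-time computation has polynomial depth but exponentially many branches---a shape that matches a difference equation whose height is polynomial while its width and cell size are allowed to be exponential.

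Given a closed quantified Boolean formula $\varphi = Q_1 x_1 \cdots Q_n x_n \, \psi(x_1,\dots,x_n)$ encoded as an input~$u$, I would take $Q_u = 2^n$ and $R_u$ of order~$2^n$. Row~$0$ is identically zero by the boundary condition. Row~$1$ is arranged so that $H(1, T)$ counts the satisfying assignments of $\psi$ among the first $T$ enumerated binary strings; this is achieved by taking $G_u(0, T, 0) = \psi(T)$, which is polynomial-time computable. Each subsequent row peels off one quantifier: define $\hat V_0 = \psi$ and $\hat V_i(x_1,\dots,x_{n-i}) = Q_{n-i+1} x_{n-i+1}.\, \hat V_{i-1}$, so that $\hat V_n$ is the truth value of $\varphi$. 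I would design row~$i+1$ so that, whenever row~$i$ encodes the values of $\hat V_{i-1}$ on every $T$-block of length $2^{i-1}$, row~$i+1$ aggregates consecutive pairs of such blocks by $\vee$ (for $\exists$) or $\wedge$ (for $\forall$) into a similar encoding of $\hat V_i$ on blocks of length $2^i$. After $n$ such peelings the bottom-right cell $H(n+1, Q_u)$ equals $\hat V_n \in \{0, 1\}$, i.e., $L(u)$.

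The main obstacle is the rigid restriction $G_u \in \{-1, 0, 1\}$: at any single $T$-step the next row's cell can change by at most one, so the aggregated value of a whole block cannot be committed instantly, and a distant boundary value of the previous row cannot be looked up directly at a later time. I would overcome this by (a) spreading each aggregated update over the entire block of $T$-values it summarizes, using the fact that a length-$\ell$ sequence of $\{-1, 0, 1\}$-increments can realize any integer in $[-\ell, \ell]$ as its total, and (b) shaping row~$i$ so that everything row~$i+1$ needs to read is encoded locally in the current value $H(i, T)$---for example by letting $H(i, \cdot)$ vary inside each block according to a canonical trajectory whose net change across the block equals the aggregated Boolean value assigned to that block. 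The resulting local rules are polynomial-time computable from $(u, i, T, H(i, T))$, so the family is uniform; the height is $n + O(1) = |u|^{O(1)}$ as required, and both the width $2^n$ and the cell size $O(2^n)$ lie below $2^{p(|u|)}$. The \classPSPACE\ upper bound on $L$ mentioned in the footnote is then immediate, since the $H$-table can be swept row by row in polynomial space, each row having polynomial bit-length entries and depending only on the previous one.
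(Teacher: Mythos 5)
The paper does not reprove this lemma; it is cited from \cite[Lemma 4.7]{kawamura2010lipschitz}, but the paper's own proof of the analogous Lemma~\ref{DIVPlogIsCHhard} shows exactly which device your outline lacks. Your high-level plan (reduce TQBF, one row per quantifier, polynomial height with exponential width and cell size) matches the intended construction, and you correctly identify the obstacle: increments are confined to $\{-1,0,1\}$ and $G_u(i,T,\cdot)$ sees only $H_u(i,T)$ at the single current~$T$. The gap is in your resolution~(b). If, as you propose, the net change of $H_u(i,\cdot)$ across block~$b$ is the Boolean value $V_i(b)\in\{0,1\}$, then for $T$ inside block~$b$ the cell holds $H_u(i,T)=\sum_{b'<b}V_i(b')$ plus a small local offset, and that global prefix sum is exactly what $G_u(i,T,\cdot)$ cannot strip away from a single reading. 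Moreover, the increment to row $i+1$ at time~$T$ depends only on $(T,H_u(i,T))$ and never on $H_u(i+1,T)$, so row $i+1$ cannot ``remember'' what it read in the first half of a block once it is in the second half: the two inputs to the $\vee$ or $\wedge$ aggregation are never simultaneously available to $G_u$. Spreading the update over the block (your point~(a)) only helps once you can detect, at each single~$T$, what to spread.

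The proof of Lemma~\ref{DIVPlogIsCHhard} fixes this with a sign-cancellation mechanism: a factor $(-1)^{T_{s_{i+1}}}$ is built into $G_u$, so that in the running sum $H_u(i,T)=\sum_{V<T}G_u(i-1,V,H_u(i-1,V))$ the contributions of whole earlier blocks pair off and vanish, and at the designated times (those $T$ with $T_{[1,s_i+1]}=10\cdots0$) the cell value $H_u(i,T)$ equals the \emph{local} block count $\sum_x\phi_{i-1}(x,\dots)$, not a global prefix. A threshold on that local count then implements the quantifier. Since $\exists$ and $\forall$ over one Boolean variable are $C^1$ and $C^2$, the same device specialised to TQBF (each $l_i=1$, hence $n+1$ rows and polynomial height) is what \cite[Lemma~4.7]{kawamura2010lipschitz} uses. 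Without this cancellation---or some equivalent way to make the running sum forget earlier blocks---your construction does not go through as stated, and that is the concrete missing step.
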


Kawamura~\cite{kawamura2010lipschitz} obtained the hardness result in the third row of Table \ref{table:related} 
by simulating the difference equations of Lemma~\ref{DIVPpolyIsPSPACEhard}
by Lipschitz continuous differential equations. 
Likewise, 
Theorem~\ref{DifferentiableIsPspace} follows from Lemma~\ref{DIVPpolyIsPSPACEhard}
by a modified construction that keeps 
the function in class $\classC ^{(\infty, 1)}$ 
(Sections \ref{subsection: ode family} and \ref{subsection: proof of theorems}).

We show further that 
difference equations restricted to logarithmic height can be simulated by
$\classC ^{(\infty, k)}$ functions for each $k$ 
(Sections \ref{subsection: ode family} and \ref{subsection: proof of theorems}).
Theorem~\ref{KTimesIsCH} follows from this simulation and the following lemma.
\begin{lemma}
 \label{DIVPlogIsCHhard}
 There exists a $\classCH$-hard language $L$ that is recognized by some uniform family of functions with logarithmic height.
\end{lemma}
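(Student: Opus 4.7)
The plan is to combine the counting-hierarchy complete problem $\quantC_n B_{\mathrm{be}}$ with a row-per-quantifier simulation by difference equations, mirroring the proof of Lemma~\ref{DIVPpolyIsPSPACEhard} but organised so that $n$ counting quantifiers use only $\mathrm O(n)$ rows.

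First, define $L$ to consist of those $u = \langle v, 1^N \rangle$ for which $v$ encodes a true $\quantC_n B_{\mathrm{be}}$ instance $\quantC^{m_n} X_n \ldots \quantC^{m_1} X_1 \, \phi(X_1, \ldots, X_n)$ with $n \le \log N$. To see that $L$ is $\classCH$-hard, take any $A \in \classCH$; then $A \in \quantC_k \classP$ for some fixed $k$, and Lemma~\ref{lemma:CnP-complete} provides a polynomial-time reduction $F$ from $A$ to $\quantC_k B_{\mathrm{be}}$. Hence $x \mapsto \langle F(x), 1^{2^k}\rangle$ is a polynomial-time reduction from $A$ to $L$, since $k \le \log 2^k$ and the unary padding has constant length once $A$ is fixed.

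Next, for each such $u$ set $l := l_1 + \cdots + l_n$, take the height $P_u = cn$ for a small constant $c$, and pick the width $Q_u$ and cell size $R_u$ as appropriate multiples of $2^l$; since $n \le \log N \le \log |u|$, the family has logarithmic height, both $Q_u$ and $R_u$ are bounded by $2^{\mathrm{poly}(|u|)}$, and $G_u(i, T, Y)$ is polynomial-time in $(u, i, T, Y)$. Index $T \in [Q_u]$ by a full Boolean assignment $(X_1, \ldots, X_n)$ with $X_1$ in the least significant bits, so that a single pass of $T$ across a block of $2^{l_i}$ consecutive positions varies only $X_i$ while the outer variables stay fixed. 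Writing $\Psi_0 := \phi$ and $\Psi_i(X_{i+1}, \ldots, X_n) = 1$ iff $\sum_{X_i} \Psi_{i-1}(X_i, \ldots, X_n) \ge m_i$, the target output is $\Psi_n() \in \{0,1\}$, and $G_u$ will be designed so that the cumulative structure of the difference equation turns a tabulation of $\Psi_{i-1}$ stored in one block of rows into a tabulation of $\Psi_i$ in the next block, giving $H_u(P_u, Q_u) = \Psi_n()$ by induction on~$i$.

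The main obstacle is exactly what forces the multiple rows per quantifier: $G$ may output only $\{-1, 0, 1\}$ and sees only the local triple $(i, T, H(i, T))$, whereas the natural threshold check compares two values of $H(i-1, \cdot)$ at positions $2^{l_i}$ apart. The remedy is to pack two pieces of data into each cell value --- a cumulative $\Psi_i$-count and a within-chunk local counter --- and to use the low bits of $T$ as a ``phase indicator'' telling $G$ whether to increment the local counter by $\Psi_{i-1}$, fire a single $+1$ at the first position where the counter reaches $m_i$, or tick the counter back down to $0$ before the next chunk begins. This is the same encoding trick that makes Lemma~\ref{DIVPpolyIsPSPACEhard} work in~\cite{kawamura2010lipschitz}; the new observation is simply that one counting step now plays the role that was played there by one alternation, so the height of the difference equation matches the quantifier depth, which under the unary padding is $\mathrm O(\log |u|)$.
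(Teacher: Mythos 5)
Your choice of language $L$ (padded $\quantC_n B_{\mathrm{be}}$ with exponential unary padding, giving $n = \mathrm O(\log|u|)$) and your $\classCH$-hardness argument are essentially the same as the paper's, and you have correctly identified the central obstacle: $G$ is confined to $\{-1,0,1\}$, sees only the local triple $(i,T,H(i,T))$, and cannot directly compare $H(i-1,\cdot)$ at positions $2^{l_i}$ apart.

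However, your remedy is where the proof has a genuine gap, and the mechanism you sketch is different from --- and less convincing than --- the one the paper actually uses. You propose to ``pack two pieces of data into each cell value'' (a cumulative $\Psi_i$-count plus a within-chunk local counter) and use the low bits of $T$ as a phase indicator with separate increment/fire/tick-down phases. But each row accumulates $H(i+1,T+1) - H(i+1,T) = G(i,T,H(i,T)) \in \{-1,0,1\}$, so a cell value changes by at most $1$ per step. If the two counters occupy disjoint bit ranges of a single cell, then ``firing a $+1$'' into the cumulative field while the local field is midway is a jump much larger than $1$ and is impossible in one step; and if the reset of the local counter back to $0$ is done by a stretch of $-1$ steps, the cell value sweeps through many intermediate values and the $C^{m_i}$ threshold test would have to be shielded from every one of them. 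None of this is spelled out, and the claim that ``this is the same encoding trick that makes Lemma~\ref{DIVPpolyIsPSPACEhard} work'' glosses over precisely the point that had to be resolved: the $\classPSPACE$ lemma has polynomial height available, which is exactly what you cannot afford here.

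The paper resolves the obstacle not with packing or phase separation but with a sign-flipping cancellation: it reserves a dedicated flip bit $T_{s_i}$ between variable blocks (so $s_i = \sum_{j\le i}(l_j+1)$) and multiplies each row's output by $(-1)^{T_{s_{i+1}}}$. In the partial sum $H_u(i,T) = \sum_{V<T} G_u(i-1,V,H_u(i-1,V))$, contributions from $V$ with $V_{s_i}=0$ and their partners with $V_{s_i}=1$ cancel in matched pairs, so at the ``firing'' positions $T_{[1,s_i+1]} = 10\cdots0$ the cell value equals exactly $\sum_{x_i} \phi_{i-1}(x_i, T_{[s_i+1,s_{i+1}]}, \ldots)$, automatically bounded by $2^{l_i}$ and uncontaminated by earlier blocks. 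This gives one row per quantifier (height $n+1$) with no packing, no reset phase, and no extra rows. That trick is the genuinely new ingredient of this lemma and is what your proposal is missing; without it (or a worked-out alternative), the step from ``tabulation of $\Psi_{i-1}$'' to ``tabulation of $\Psi_i$'' remains an assertion rather than a proof.
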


\begin{proof}
We define the desired language~$L$ by
\begin{equation}
\label{equation: definition of padded CQBF}
 \langle 0^{2^n}, w \rangle \in L
 \iff
 w \in \quantC_n B_{\mathrm{be}}, 
\end{equation}
using the $\quantC _n \classP$-hard language $\quantC_n B_{\mathrm{be}}$ 
from Lemma~\ref{lemma:CnP-complete}. 
Then $L$ is obviously $\classCH$-hard. 

We construct a logarithmic-height uniform function family $(G_u)_u$
recognizing $L$.
We will describe how to construct $G _u$ 
for a string~$u$ of the form $\langle 0^{2^n}, \allowbreak
\langle \phi(X_1, \dots, X_n), m_1, \dots, m_n \rangle \rangle$, 
where 
$\phi$ is a formula, and 
$X _1$, \ldots, $X _n$ are lists of variables
with lengths $m _1$, \ldots, $m _n$, respectively. 
 
We write $l_i = |X_i|$ and $s_i = \sum^i_{j=1} (l _j + 1)$.
For each $i \in \{0, \dots, n\}$ and
$x _{i+1} \in \{0,1\}^{l_{i+1}}, \dots, x _n \in \{0,1\}^{l_n}$,
we write $\phi_i(x_{i+1}, \dots, x_n)$ for the truth value ($1$ or $0$) of $
\quantC^{m_{i}}{X_i} \ldots \allowbreak \quantC^{m_1}{X_1} \allowbreak 
\phi(X_1, \dots, X_i, x_{i+1}, \dots, x_n)
$.  Note that 
$\phi _0 (x _1, \ldots, x _n) = \phi (x _1, \ldots, x _n)$, 
$\phi _n () = L (u)$, 
and the relation between $\phi _{i - 1}$ and $\phi _i$ is given by 
\begin{equation}
\label{eq:phi-step}
   \phi _i (x _{i+1}, \dots, x_n) 
  = 
   C^{m_i} \Bigl( 
     \sum_{x_i \in \{ 0,1 \} ^{l_i}}
		\phi _{i - 1} (x _i, x_{i+1}, \dots, x_{n})
   \Bigr), 
\end{equation}
where we define $C ^m \colon \N \to \{0,1\}$ by 
\begin{equation}
 C ^m (k) 
  = \begin{cases}
     1 & \text{if} \ k \ge m, \\
     0 & \text{if} \ k < m.
    \end{cases}
\end{equation}
For $T \in \N$, we write $T_i$ for the 
$i$th digit of $T$ written in binary 
($T _0$ being the least significant digit),
and $T_{[i,j]}$ for the string $T_{j-1} T_{j-2} \cdots T_{i+1} T_{i}$.

For each $(i, T, Y) \in [n+1] \times [2^{s_n}+1] \times [2^{|u|}]$,
we define $G_u (i, T, Y)$ as follows.
The first row is given by
\begin{equation}
\label{eq:def-Gu:case0}
  G_u(0,T,Y) = 
   (-1)^{T_{s_1}}\phi(T_{[1,s_1]}, T_{[s_1+1,s_2]},
    \dots, T_{[s_{n-1}+1,s_n]}), 
\end{equation}
and for $i \neq 0$, we define 
\begin{equation} 
  G_u(i,T,Y) = 
   \begin{cases}
    (-1)^{T_{s_{i+1}}} C^{m_i}(Y) 
    & \text{if} \ T_{[1,s_i+1]} = 10 \cdots 0, \\
    0 & \text{otherwise}.
   \end{cases} 
   \label{eq:def-Gu:case-nonzero}
 \end{equation}
This family $(G_u)_u$ is clearly uniform, 
and its height $n + 1$ is logarithmic in $|u|$.

Let $H_u$ be the solution (as defined in 
\eqref{eq:initial value} and \eqref{eq:divp}) 
of the difference equation given by $G_u$. 
We prove by induction on $i$ 
that $H _u (i, T) \in [2^{l_i} + 1]$ for all $T$, and that 
\begin{equation} \label{eq:subformula}
  G_u(i,V,H_u(i,V)) = (-1)^{V_{s_{i+1}}} 
   \phi_i(V_{[s_i+1, s_{i+1}]}, \dots, V_{[s_{n-1}+1, s_n]})
\end{equation}
if $V_{[1, s_i +1]} = 10 \cdots 0$
(otherwise it is immediate from the definition that $G_u(i, V, H_u(n, V)) = 0$).
For $i = 0$, the claims follow from \eqref{eq:def-Gu:case0}.
For $i > 0$, suppose that the induction hypothesis
\begin{equation} \label{eq:subformula minus one}
  G _u (i - 1, V, H _u (i - 1, V)) = (-1) ^{V_{s_i}} 
   \phi _{i - 1} (V _{[s _{i - 1} + 1, s _i]}, \dots, V _{[s_{n-1}+1, s_n]}) 
\end{equation}
holds. 
Since $H _u$ is the solution of the difference equation given by $G _u$, we have
 \begin{equation} \label{eq:summation}
  H _u (i, T) 
  = \sum_{V = 0}^{T-1} G_u(i - 1, V, H_u(i - 1, V)).
 \end{equation}
Since the assumption \eqref{eq:subformula minus one} 
implies that flipping the bit $V_{s_i}$ of
any $V$ reverses the sign of $G _u (i - 1, V, H _u(i - 1, V))$,
most of the summands in \eqref{eq:summation} cancel out.
The only nonzero terms that can survive 
are the ones corresponding to those $V$ 
that satisfy $V_{[1, s_{i - 1}+1]} = 10 \cdots 0$ and
lie between 
the numbers whose binary representations are
$T_{s_n} \dots T_{s_i+1} 00 \dots 0$ and 
$T_{s_n} \dots T_{s_i+1} 01 \dots 1$. 
There are at most $2 ^{l _i}$ such terms, 
and each of them is $0$ or $1$, 
so $H _u (i, T) \in [2 ^{l _i} + 1]$.
In particular, if $V_{[1,s_i+1]} = 10 \cdots 0$, we have
 \begin{equation}
  H _u (i, V) = \sum_{x \in \{0,1\}^{l_i}}
  \phi _{i - 1} (x, V_{[s_i+1, s_{i+1}]}, \dots, V_{[s_{n-1}+1, s_n]}).
 \end{equation}
By this and \eqref{eq:phi-step}, we have 
 \begin{equation}
  C^{m_i} (H _u(i, V)) 
  = \phi _i (V_{[s_i+1, s_{i+1}]}, \dots, V_{[s_{n-1}+1, s_n]}).
\end{equation}
By this and \eqref{eq:def-Gu:case-nonzero}, we have \eqref{eq:subformula}, 
completing the induction step.

By substituting $n$ for $i$ and $2^{s_n}$ for $V$ in \eqref{eq:subformula},
we get $G_u(n, 2^{s_n}, H_u(n,2^{s_n})) = \phi_n() = L (u)$.
Hence $H_u(n+1, 2^{s_n}+1) = L (u)$.
\end{proof}

Note that for $\classCH$-hardness, 
we could have defined $L$ using a faster growing function than $2 ^n$ in
\eqref{equation: definition of padded CQBF}. 
This would allow the difference equation in Lemma~\ref{DIVPlogIsCHhard}
to have height smaller than logarithmic. 
We stated Lemma~\ref{DIVPlogIsCHhard} with logarithmic height, 
simply because this was the highest possible difference equations
that we were able to simulate by $\classC ^{(\infty, k)}$ functions
(in the proof of Lemma~\ref{KTimesFamily} below). 

\subsection{Simulating difference equations by real functions}
\label{subsection: ode family}
We show that certain families of smooth differential equations can simulate 
$\classPSPACE$- and $\classCH$-hard difference equations from the previous section.

Before stating Lemmas~\ref{KTimesFamily} and \ref{DifferentiableFamily},
we extend the definition of polynomial-time computability of real functions
to that of \emph{families} of real functions.
A machine $M$ \emph{computes} a family $(f_u)_u$ of functions $f _u \colon A \to \R$ 
indexed by strings $u$
if for any $x \in A$ and any name $\phi_x$ of $x$,
the function taking $v$ to $M ^{\phi _x} (u, v)$ is a name of $f _u (x)$.
We say a family of real functions $(f_u)_u$ is polynomial-time if there is
a polynomial-time machine computing $(f_u)_u$.

 \begin{lemma}
  \label{KTimesFamily}
  There exist a $\classCH$-hard language $L$ and a polynomial $\mu$,
  such that for any $k \ge 1$ and polynomial $\gamma$,
  there are a polynomial $\rho$ and families $(g_u)_u$, $(h_u)_u$ of real functions
  such that $(g_u)_u$ is polynomial-time computable and for any string $u$:
  \begin{enumerate}
   \item \label{enum:kf:start}
	 $g_u\colon [0,1] \times [-1,1]\to \R$, $h_u\colon [0,1] \to [-1,1]$;
   \item \label{enum:equation}
	 $h_u(0) = 0$ and $\D h_u(t) = g_u(t, h_u(t))$ for all $t \in [0,1]$;
   \item \label{enum:differentiability}
         $g_u$ is of class $\classC^{(\infty, k)}$;
   \item \label{enum:boundary}
	 $
	 \D^{(i, 0)} g_u(0,y) = \D^{(i, 0)} g_u(1,y) = 0
         $ for all $i \in \N$ and $y \in [-1,1]$;
   \item \label{enum:smooth}
	 $
	 \left|\D^{(i,j)} g_u(t,y)\right| \le 2^{\mu(i, |u|) - \gamma(|u|)}
         $ for all $i \in \N$ and $j \in \{0, \dots, k\}$;
   \item \label{enum:kf:end}
	 $h_u(1) = 2^{-\rho(|u|)} L(u)$.
  \end{enumerate}
 \end{lemma}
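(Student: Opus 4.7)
The plan is to apply Lemma~\ref{DIVPlogIsCHhard} to obtain a $\classCH$-hard language $L$ and a uniform family $(G_u)_u$ of logarithmic-height difference equations recognizing $L$, and then to simulate $(G_u)_u$ by a family of smooth ODEs, adapting the Lipschitz construction of~\cite{kawamura2010lipschitz} to the smoothness class $\classC^{(\infty, k)}$. Write $P_u \le c\log |u|$ for the height of $G_u$ and $Q_u, R_u \le 2^{q(|u|)}$ for its width and cell size; the polynomial $\mu$ will depend only on $q$ and on the fixed bump templates chosen below, while $\rho$ will depend on both $k$ and $\gamma$.

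For each $u$, I would partition $[0,1]$ into $N = P_u(Q_u + 1)$ equal subintervals $I_{i, T}$ of length $\tau = 1/N$, one per cell of the difference-equation table, with small buffer gaps near $t = 0$ and $t = 1$. On each $I_{i, T}$, I set $g_u$ to the product of a time-localized $\classC^\infty$ spike $\psi_{i, T}(t)$ (a rescaled copy of a fixed $\classC^\infty$ bump supported in $(0, 1)$) and a $\classC^k$ ``cell-indicator'' in $y$ that reads $H_u(i, T)$ off the rescaled position of $h_u$ and emits the corresponding value $G_u(i, T, \cdot) \in \{-1, 0, 1\}$. An overall prefactor $2^{-\rho(|u|)}$ is inserted so that $h_u$ stays in $[-1, 1]$ and so that (vi) becomes $h_u(1) = 2^{-\rho(|u|)} L(u)$; condition (iv) is automatic because each $\psi_{i, T}$ vanishes to infinite order at the endpoints of $I_{i, T}$, hence at $t = 0, 1$. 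The derivative estimates in (v) split by the chain rule: the $i$-th $t$-derivative of a spike contributes at most $\tau^{-i} \le 2^{i \cdot q'(|u|)}$, giving a universal $\mu(i, n) = i \cdot q'(n)$, while the $j$-th $y$-derivative of the cell-indicator contributes a factor $S^j$ where $S = 2^{\Theta(\rho(|u|))}$ is the horizontal rescaling of $h_u$; for $j \le k$ the factor $S^k$ is absorbed into the slack $2^{-\gamma(|u|)}$ by choosing $\rho$ large enough (depending on $k$ and $\gamma$). Polynomial-time computability of $(g_u)_u$ then reduces to that of $(G_u)_u$ together with explicit evaluation of the two bump templates.

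The hard part will be the ``cross-row lookup'' hidden inside the cell-indicator: executing cell $(i+1, T)$ needs the value $H_u(i, T)$ written during row $i$, but $h_u$ is a single real-valued state flowing forward in time and cannot literally remember a whole earlier row. The fix, following the pattern of~\cite{kawamura2010lipschitz}, is to encode the rows inside $h_u$ at separated scales---for instance, a base-$S$ digit stack with row $j$ at scale $S^{-j-1}$ and $S > R_u$---so that each cell's update only has to inspect one digit block, and a $\classC^k$ indicator is enough for the reading. The total scale forced on $h_u$ is $S^{P_u+1} = 2^{O(P_u(q(|u|) + \log S))}$; this quantity enters $\rho$ and stays polynomial in $|u|$ \emph{precisely because} $P_u$ is logarithmic. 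A polynomial-height family would push $S^{P_u}$ past any polynomial~$\rho$ and break (v) for the fixed~$\mu$; this is the point at which the logarithmic-height hypothesis of Lemma~\ref{DIVPlogIsCHhard} is essential, as foreshadowed in the introduction to this section. Maintaining this digit-encoding invariant inductively along the cell ordering yields $h_u(1) = 2^{-\rho(|u|)} H_u(P_u, Q_u) = 2^{-\rho(|u|)} L(u)$, finishing~(vi).
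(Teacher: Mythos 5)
Your high-level plan matches the paper's: invoke Lemma~\ref{DIVPlogIsCHhard} to get the $\classCH$-hard language and a logarithmic-height family $(G_u)_u$, encode the rows of $H_u$ as digit blocks of $h_u$ at separated scales, and drive $h_u$ by time-localized $\classC^\infty$ bumps multiplied against a smooth $y$-indicator. However, your proof of the derivative bound~(\ref{enum:smooth}) has a genuine gap at its most delicate point, and the gap is precisely the one the paper's construction is designed around.

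You position row $i$ at scale $S^{-i-1}$, i.e.\ \emph{linear} spacing of exponents, and then claim the $j$th $y$-derivative of a cell-indicator contributes a uniform factor $S^j$ that can be absorbed ``by choosing $\rho$ large enough.'' This does not work for $k \geq 2$. The indicator reading the row-$i$ digit has $y$-support of width $\sim S^{-(i+1)}$, so its $k$th $y$-derivative contributes $S^{k(i+1)}$, while the magnitude of the corresponding contribution to $g_u$ is only $\sim S^{-(i+2)}$ (the scale of the next digit) divided by the time-width. The resulting bound on $\lvert \D^{(0,k)} g_u \rvert$ grows like $S^{(k-1)(i+1)-1}$, which blows up with $i$ whenever $k \geq 2$; and increasing $\rho$ only makes $S$ larger, so it cannot save the estimate. (Your absorption argument does work for $k = 1$, which is exactly Lemma~\ref{DifferentiableFamily} with $d_u(i)=i$.) The paper instead uses the \emph{exponential} positioning $d_u(i) = (k+1)^i$, giving $d_u(i+1) > k\,d_u(i)$, so the factor $B^{k\,d_u(j_u(T))}$ from $k$ $y$-derivatives is dominated by the $B^{-d_u(j_u(T)+1)}$ magnitude in \eqref{equation: upperbound of differences}. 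Consequently, your explanation of why logarithmic height is needed is also off: with linear spacing, $S^{P_u}$ has a polynomially bounded exponent even for polynomial $P_u$ (again, this is exactly the $\classC^{(\infty,1)}$ case). What actually forces $P_u = \mathrm O(\log|u|)$ is that $(k+1)^{P_u}$ must remain bounded by a polynomial $\sigma(|u|)$ so that $\rho$ stays polynomial -- the discussion after the paper's proof states this explicitly. Fix the digit positioning and the rest of your outline goes through.
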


\begin{lemma}
 \label{DifferentiableFamily}
 There exist a $\classPSPACE$-hard language $L$ and a polynomial $\mu$,
 such that for any polynomial $\gamma$,
 there are a polynomial $\rho$ and families $(g_u)_u$, $(h_u)_u$ of real functions
 such that $(g_u)_u$ is polynomial-time computable and for any string $u$
 satisfying (\ref{enum:kf:start})--(\ref{enum:kf:end}) of Lemma~\ref{KTimesFamily} with $k = 1$.
\end{lemma}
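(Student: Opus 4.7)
The plan is to mirror the construction of Lemma~\ref{KTimesFamily}, substituting $k = 1$ throughout and starting from the polynomial-height $\classPSPACE$-hard difference-equation family $(G_u)_u$ of Lemma~\ref{DIVPpolyIsPSPACEhard} in place of the logarithmic-height $\classCH$-hard family used there. Write $P_u$, $Q_u$, $R_u \leq 2^{p(|u|)}$ for its height, width and cell size.

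First I would partition the time interval $[0, 1]$ into $P_u Q_u$ equal cells $I_{i, T}$ (one per pair of row $i$ and column $T$), and the range $[-1, 1]$ into $R_u$ equal sub-intervals centered at points $y_Y$ representing the possible cell values $Y \in [R_u]$. Then I would define
\[
 g_u(t, y) = 2^{-\rho(|u|)} \sum_{i, T, Y} G_u(i, T, Y)\, \alpha_{i, T}(t)\, \beta_Y(y),
\]
where each $\alpha_{i, T}$ is a $\classC^\infty$ bump in $t$ supported on $I_{i, T}$, obtained from a fixed template by translation and rescaling, with all $t$-derivatives vanishing on $\partial I_{i, T}$ (which gives condition (\ref{enum:boundary})) and integrating to $1/R_u$; each $\beta_Y$ is a $\classC^1$ hat function centered at $y_Y$, for instance a Hermite cubic, chosen so that $\beta_Y(y_{Y'}) = \delta_{Y, Y'}$ and $\sum_Y \beta_Y \equiv 1$ on the range relevant to $h_u$.

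The central verification is an induction over cells $(i, T)$ showing that the trajectory $h_u$ tracks the discrete solution $H_u$ in the sense $h_u(t) \approx 2^{-\rho(|u|)} H_u(i, T) / R_u$ throughout $I_{i, T}$; since $h_u$ then sits in the $Y = H_u(i, T)$ sub-interval, the reading via $\beta_Y$ picks out $G_u(i, T, H_u(i, T))$, and integrating over $I_{i, T}$ advances $h_u$ by precisely $2^{-\rho(|u|)} G_u(i, T, H_u(i, T)) / R_u$. Choosing $\rho$ sufficiently large makes $h_u$ stay in $[-1, 1]$, yields $h_u(1) = 2^{-\rho(|u|)} L(u)$ (condition (\ref{enum:kf:end})), and also meets the derivative bound (\ref{enum:smooth}): formal differentiation gives $|\D^{(i, j)} g_u(t, y)| \leq C\, 2^{-\rho(|u|)} (P_u Q_u)^i R_u^j$ for a constant $C$ depending only on the templates, and with $j \leq 1$ and $P_u, Q_u, R_u \leq 2^{p(|u|)}$ this fits inside $2^{\mu(i, |u|) - \gamma(|u|)}$ once $\mu$ is chosen polynomial in its arguments and $\rho$ is large enough to dominate $\gamma$ plus the single $R_u$ factor.

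The main obstacle is the bookkeeping of constants rather than any conceptual new idea: the only substantive difference from Lemma~\ref{KTimesFamily} is that here we have polynomial height $P_u$ (hence exponentially many time cells), which is tolerable precisely because $k = 1$ leaves only a single factor of $R_u$ in the $y$-derivative bound. Were we to demand $\classC^{(\infty, k)}$ with $k \geq 2$, the factor $R_u^k$ would no longer be absorbable into $\gamma$'s budget for an $R_u$ large enough to accommodate the polynomial-height simulation; this is exactly why Lemma~\ref{KTimesFamily} starts from a logarithmic-height family instead.
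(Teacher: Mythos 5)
Your construction does not simulate the difference equation correctly, and the gap is conceptual rather than a matter of bookkeeping. You propose to partition $[0,1]$ into $P_u Q_u$ cells $I_{i,T}$, one per \emph{pair} of row and column, and to have the trajectory satisfy $h_u(t) \approx 2^{-\rho(|u|)} H_u(i,T)/R_u$ within cell $I_{i,T}$. But then $h_u$ carries only a single entry of the table $H_u$ at any moment, which is not enough state to continue the simulation: the recurrence $H_u(i+1,T+1) - H_u(i+1,T) = G_u(i,T,H_u(i,T))$ advances row $i+1$ using the value of row $i$ at the same column $T$, so at ``time'' $T$ you need the \emph{entire column} $H_u(0,T),\dots,H_u(P_u,T)$ simultaneously. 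In your cell-by-cell scheme, once the trajectory moves on from $I_{i,T}$, the value $H_u(i,T)$ is overwritten and cannot be consulted again when row $i+1$ needs it; there is also no consistent way to make the increment read out of cell $I_{i,T}$ (which by your read-out is $G_u(i,T,H_u(i,T))$) connect $h_u$'s entry value $H_u(i,T)$ to the exit value of the \emph{next} cell, since that increment belongs to row $i+1$, not row $i$.

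The paper's construction (which it applies verbatim here, only replacing the logarithmic-height $\classCH$-hard family from Lemma~\ref{DIVPlogIsCHhard} by the polynomial-height $\classPSPACE$-hard family from Lemma~\ref{DIVPpolyIsPSPACEhard}, and taking $d_u(i)=i$, $\sigma=p$ in \eqref{eq:positioning} and \eqref{eq: rho}) avoids this information bottleneck in a way your proposal misses: the time axis is subdivided only by the column index $T$, and the whole column $H_u(0,T),\dots,H_u(p(|u|),T)$ is encoded simultaneously in the successive base-$B$ digits of the single real number $h_u(T/2^{q(|u|)})$, via the positioning map $d_u$. The fact that only one row $j_u(T)$ is ``active'' at each $T$ is then what lets $g_u$ read off and update the correct digit. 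This also corrects your explanation of why $k\geq 2$ forces logarithmic height: it has nothing to do with a factor $R_u^k$ (the cell size does not grow with the height), but with the requirement $d_u(i+1) > k\, d_u(i)$ in \eqref{equation: upperbound of differences}, which for $k\geq 2$ forces $d_u$ to grow exponentially and hence, to keep $\rho$ polynomial, forces $p$ to be logarithmic. For $k=1$ the linear $d_u(i)=i$ suffices, which is exactly why polynomial height is admissible in Lemma~\ref{DifferentiableFamily}.
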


In Lemmas~\ref{KTimesFamily} and \ref{DifferentiableFamily}, 
we have the new conditions (\ref{enum:differentiability})--(\ref{enum:smooth})
about the smoothness and the derivatives of $g_u$ 
that were not present in \cite[Lemma 4.1]{kawamura2010lipschitz}.
To satisfy these conditions,
we construct $g_u$ using the smooth function $f$ in following lemma.

\begin{lemma}[{\cite[Lemma 3.6]{ko1991complexity}}]
 \label{SmoothFunction}
 There exist 
 a polynomial-time function $f \colon [0, 1] \to \R$ of class $\classC^\infty$ and 
 a polynomial $s$ such that
  \begin{itemize}
   \item $f(0) = 0$ and $f(1) = 1$;
   \item $\D ^n f (0) = \D ^n f (1) = 0$ for all $n \ge 1$;
   \item $f$ is strictly increasing;
   \item $\D ^n f$ is polynomial-time computable for all $n \ge 1$;
   \item $|\D^n f| \le 2^{s(n)}$ for all $n \ge 1$. 
  \end{itemize}
 \end{lemma}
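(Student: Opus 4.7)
The plan is to use the classical smooth-bump construction. Define $\psi \colon \R \to \R$ by $\psi(x) = e^{-1/x}$ for $x > 0$ and $\psi(x) = 0$ for $x \le 0$; it is standard that $\psi$ is of class $\classC^\infty$ on $\R$ with $\D^n \psi(0) = 0$ for every $n \ge 0$. Then set
\[
f(x) = \frac{\psi(x)}{\psi(x) + \psi(1 - x)}.
\]
The denominator is $\classC^\infty$ and bounded below by a positive constant on a neighbourhood of $[0, 1]$ (at the endpoints it equals $\psi(1) = e^{-1}$), so $f$ is $\classC^\infty$ there. The values $f(0) = 0$ and $f(1) = 1$ are immediate; $\D^n f(0) = \D^n f(1) = 0$ for $n \ge 1$ follows from $\D^n \psi(0) = 0$ for all $n$ by the quotient and Leibniz rules; and the computation
\[
\D f(x) = \frac{\D \psi(x) \cdot \psi(1 - x) + \psi(x) \cdot \D \psi(1 - x)}{(\psi(x) + \psi(1 - x))^2}
\]
shows that $\D f$ is strictly positive on $(0, 1)$, since $\psi$ and $\D \psi$ are strictly positive on $(0, \infty)$.

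For polynomial-time computability of each $\D^n f$, I would work from the closed form $\psi^{(n)}(x) = P_n(1/x) \cdot e^{-1/x}$, with $P_n$ the polynomial of degree $2n$ determined by $P_0 = 1$ and the recurrence $P_{n+1}(u) = u^2 (P_n(u) - P_n'(u))$. For $x$ bounded away from $0$, this expression is computable in polynomial time by standard algorithms for division and the exponential. For $x$ so small that $|P_n(1/x) \cdot e^{-1/x}| < 2^{-m}$ -- which, for each fixed $n$, happens uniformly once $x$ falls below $c_n / m$ for a constant $c_n$ -- the machine can simply output $0$. Polynomial-time computability of $f$ and of each $\D^n f$ then follows by combining these values arithmetically, using that the denominator is bounded below by a positive constant.

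The technical heart is the bound $|\D^n f| \le 2^{s(n)}$ for a polynomial $s$. A straightforward induction in the recurrence for $P_n$ shows that its coefficients have magnitude at most $(2n)!$, and since $\sup_{u \ge 1} u^k e^{-u} = k^k e^{-k}$ is itself of order $2^{O(k \log k)}$, one gets $\sup_{x \in (0, 1]} |\psi^{(n)}(x)| \le 2^{O(n \log n)}$. I would then apply Leibniz and Faà di Bruno to $f = \psi / v$ with $v = \psi + \psi(1 - \cdot)$: since $v \in \classC^\infty$ is bounded below on $[0, 1]$, the same type of estimate propagates to the derivatives of $1 / v$, and the multinomial coefficients contributed by Faà di Bruno add at most a further $n!$. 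The net bound is $|\D^n f| \le 2^{O(n \log n)}$, which lies within $2^{s(n)}$ for a polynomial such as $s(n) = n^2$.

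The main obstacle I expect is the bookkeeping in the derivative expansions: Faà di Bruno on a nested quotient and on the composition $\psi(1 - \cdot)$ produces sums over partitions whose explicit handling is tedious, and a crude estimate risks pushing the exponent past polynomial. Everything else is a careful verification of standard properties of the $e^{-1/x}$ bump.
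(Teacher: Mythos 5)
Your proof is correct and follows the classical $e^{-1/x}$ bump construction, which is the same one underlying Ko's Lemma 3.6; the paper itself gives no proof here, merely citing Ko and remarking that the polynomial bound $|\D^n f| \le 2^{s(n)}$ — the only part not explicitly in Ko — "can be proved easily," and your estimate via the recurrence $P_{n+1}(u)=u^2(P_n(u)-P_n'(u))$ together with the Fa\`a di Bruno/Leibniz bookkeeping is exactly the kind of easy verification intended.
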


Although the existence of the polynomial~$s$ 
was not explicitly stated in \cite[Lemma 3.6]{ko1991complexity},
it can be proved easily.

We will prove Lemma~\ref{KTimesFamily} using Lemma~\ref{DIVPlogIsCHhard} as follows.
Let $(G_u)_u$ be a family of functions obtained by Lemma~\ref{DIVPlogIsCHhard},
and let $(H_u)_u$ be the family of the solutions of 
the difference equations given by $(G_u)_u$.
We construct $h_u$ and $g_u$ from $H_u$ and $G_u$ 
so that $h_u(T/2^{q(|u|)}) = \sum^{p(|u|)}_{i = 0} H_u(i, T)/B^{d_u(i)}$ for each $T = 0$, \ldots, $2^{q(|u|)}$
and $\D h_u(t) = g_u(t, h_u(t))$.
The polynomial-time computability of $(g_u)_u$ follows from that of $(G_u)_u$.
We omit the analogous and easier proof of Lemma~\ref{DifferentiableFamily}.

\begin{proof}[Proof of Lemma~\ref{KTimesFamily}]
Let $L$ and $(G_u)_u$ be as in Lemma~\ref{DIVPlogIsCHhard},
and let $H_u$ be the solutions of 
the difference equations given by $G_u$.
Let $f$ and $s$ be as in Lemma~\ref{SmoothFunction}.

We may assume that 
$G_u \colon [p(|u|)] \times [2^{q(|u|)}] \times [2^{r(|u|)}] \to \{-1, 0, 1\}$
for some $p$, $q$, $r \colon \N \to \N$, 
where $p$ has logarithmic growth and $q$ and $r$ are polynomials. 
We may also assume, similarly to the beginning of the proof of \cite[Lemma 4.1]{kawamura2010lipschitz},
that 
\begin{gather}
 H_u(i, 2^{q(|u|)}) = \begin{cases}
		       L(u) & \text{if} \ i=p(|u|), \\
		       0 & \text{if} \ i<p(|u|), 
		      \end{cases}
\\
 G_u(i, T, Y) \neq 0 \to i = j_u(T)
\end{gather}
for some $
j _u \colon [2 ^{q (\lvert u \rvert)}] \to [p (\lvert u \rvert)]
$. 

We construct families of real functions $(g_u)_u$ and $(h_u)_u$ 
that simulate $G _u$ and $H _u$ 
in the sense that $h_u(T/2^{q(|u|)}) = \sum^{p(|u|)}_{i = 0}H_u(i, T)/B^{d_u(i)}$, 
where the constant $B$ and the 
(increasing) function $d_u \colon [p(|u|)+1] \to \N$ are 
defined by
  \begin{align}
   \label{eq:positioning}
   B &= 2^{\gamma(|u|) + r(|u|) + s(k) + k + 3}, 
   &
   d_u(i) &= 
   \begin{cases}
    (k+1)^i & \text{if} \ i<p(|u|), 
    \\
    \sigma(|u|) & \text{if} \ i=p(|u|), 
   \end{cases}
  \end{align}
where $\sigma$ is a polynomial satisfying $(k+1)^{p(x)} \le \sigma(x)$
(which exists because $p$ is logarithmic). 
Thus, the value $H _u (i, T) \in [2 ^{r (\lvert u \rvert)}]$ will be stored 
in the $d _u (i)$th digit of the base-$B$ expansion of 
the real number $h _u (T / 2 ^{q (\lvert u \rvert)})$. 
This exponential spacing described by $d _u$ will be needed
when we bound the derivative $\D ^{(i, j)} g _u$ 
in \eqref{equation: upperbound of differences} below. 

For each $(t, y) \in [0,1] \times [-1, 1]$,
there exist unique $N \in \N$, $\theta \in [0,1)$, $Y \in \Z$ and $\eta \in [-1/4, 3/4)$
such that $t = (T + \theta)2^{-q(|u|)}$ and $y = (Y + \eta)B^{-d_u(j_u(T))}$.
Using $f$ and a polynomial $s$ of Lemma~\ref{SmoothFunction},
we define 
$\tilde g_{u,Y} \colon [0,1] \to \R$,
$
g _u \colon [0, 1] \times [-1, 1] \to \R
$ and $
h _u \colon [0, 1] \to [-1, 1]
$ by
  \begin{align}
    \label{eq:delta}
   \tilde g_{u, Y} (t) &= \frac{2^{q(|u|)} \D f(\theta)}{B^{d_u(j_u(T)+1)}} 
   G_u \bigl( j_u(T), T, Y \bmod 2^{r(|u|)} \bigr),
   \\
  \label{eq:gu}
  g_u(t,y) 
  &= \begin{cases}
     \tilde g_{u, Y}(t)
     & \text{if } \eta \le \frac{1}{4}, 
     \\
     ( 1-f ( \frac{4\eta-1}{2})) \tilde g_{u, Y}(t)
     + f ( \frac{4\eta-1}{2}) \tilde g_{u,Y+1}(t)
     & \text{if } \eta > \frac{1}{4},
    \end{cases}
   \\
  h_u(t) 
   &= \sum^{p(|u|)}_{i=0} \frac{H_u(i, T)}{B^{d_u(i)}}  
  + \frac{f(\theta)}{B^{d_u(j_u(T)+1)}} G_u \bigl( j_u(T), T, H_u(j_u(T), T) \bigr).
  \label{eq:hu}
  \end{align}

We will verify that $(g_u)_u$ and $(h_u)_u$ defined above satisfy all the conditions stated in Lemma~\ref{KTimesFamily}.
Polynomial-time computability of $(g_u)_u$ can be verified using Lemma~\ref{lem:type1representation}.
The condition~(\ref{enum:kf:start}) is immediate, 
and (\ref{enum:equation}) follows from the relation between $G _u$ and $H _u$ 
(by a similar argument to \cite[Lemma 4.1]{kawamura2010lipschitz}).

Since $g _u$ is constructed by interpolating between 
the (finitely many) values of $G _u$ 
using a $\classC^{(\infty, \infty)}$ function~$f$, 
it is of class $\classC^{(\infty, \infty)}$ 
and thus satisfies (\ref{enum:differentiability}).
Calculating from \eqref{eq:delta}, 
we have 
\begin{equation}
 \label{eq: derivatives of delta}
    \D^i \tilde g_{u,Y}(t) 
    = \frac{2^{(i+1)q(|u|)} \D^{i+1}f(\theta)}{B^{d_u(j_u(T)+1)}}
    G_u\bigl( j_u(T), T, Y \bmod 2^{r(|u|)} \bigr) 
\end{equation}
for each $i \in \N$. 
By this and \eqref{eq:gu}, 
we have
\begin{equation}
     \D^{(i, 0)} g_u(t, y)
     = \begin{cases}
 	\D^i \tilde g_{u, Y}(t) 
	& \text{if} \ \eta \le \frac 1 4, 
	\\
	\bigl( 1-f \bigl( \frac{4\eta-1}{2} \bigr) \bigr) 
	\D^i \tilde g_{u, Y}(t)
	+ f \bigl( \frac{4\eta-1}{2} \bigr) \D^i \tilde g_{u,Y+1}(t) 
	& \text{if} \ \frac 1 4 < \eta 
       \end{cases} \label{eq:d(i,0)g_u}
\end{equation}
for each $i \in \N$ and 
  \begin{equation} \label{eq:d(i,j)g_u}
    \D^{(i, j)} g_u(t, y)
     = \begin{cases}
	0 & \text{if} \ {-\frac 1 4} < \eta < \frac 1 4, \\
	\bigl( 2B^{d_u(j_u(T))} \bigr)^j \cdot \D^j f \bigl( \frac{4\eta - 1}2 \bigr) \cdot
	(\D^i \tilde g_{u,Y+1}(t)-\D^i \tilde g_{u, Y}(t)) 
	& \text{if} \ \frac 1 4 < \eta < \frac 3 4 
       \end{cases}
  \end{equation}
for each $i \in \N$ and $j \in \{1, \dots, k\}$. 
Substituting $t = 0, 1$ ($\theta = 0$) into \eqref{eq:d(i,0)g_u},
we get $\D^{(i, 0)} g_u(0,y) = \D^{(i, 0)} g_u(1,y) = 0$, 
as required in (\ref{enum:boundary}).

We show that (\ref{enum:smooth}) holds with $\mu(x, y) = (x+1)q(y) + s(x+1)$.
Note that $\mu$ is a polynomial independent of $k$ and $\gamma$, 
and that $
 |\D^i \tilde g_{u,Y}(t)| 
\leq
 2 ^{\mu (i, \lvert u \rvert)} / B ^{d _u (j _u (T) + 1)}
$ by \eqref{eq: derivatives of delta}. 
Hence, 
\begin{align}
\label{equation: upperbound of differences}
  |\D^{(i,j)} g _u (t, y)| 
&
 \le 
    \bigl( 2 B^{d_u (j _u (T))} \bigr) ^k \cdot 2^{s(k)}
   \cdot 
    2 \cdot \frac{2 ^{\mu (i, \lvert u \rvert)}}{B ^{d _u (j _u (T) + 1)}} 
 =
   \frac{2 ^{\mu (i, \lvert u \rvert)} \cdot 2^{s(k) + k + 1}}{B ^{d _u (j _u (T) + 1) - k \cdot d _u (j _u (T))}} 
\notag
\\
&
 \le
   2 ^{\mu (i, \lvert u \rvert)} \cdot \frac{2^{s(k) + k + 1}}{B}
 \le
   2^{\mu(i, |u|) - \gamma(|u|)}
\end{align}
by \eqref{eq:d(i,0)g_u}, \eqref{eq:d(i,j)g_u} and our choice of $B$.

We have (\ref{enum:kf:end}) with
  $\rho(x) = \sigma(x) \cdot (\gamma(x)+r(x)+s(k)+k+3)$, because
  \begin{equation}
\label{eq: rho}
   h_u(1) 
   = \frac{H_u(p(|u|), 2^{q(|u|)})}{B^{d_u(p(|u|))}} 
   = \frac{L(u)}{2^{\sigma(|u|) \cdot (\gamma(|u|)+r(|u|)+s(k)+k+3)}} 
   = 2^{-\rho(|u|)} L(u). 
  \end{equation}
\end{proof}

We used the exponential positioning function $d _u$ 
defined at \eqref{eq:positioning}, 
so that we have $d _u (i + 1) > k \cdot d _u (i)$ for the 
second inequality of \eqref{equation: upperbound of differences}
for $k \geq 2$. 
Because of this, we had to restrict $p$ to be logarithmic, 
because otherwise the function $\sigma$ in \eqref{eq:positioning}
would have to be superpolynomial and so would $\rho$ in \eqref{eq: rho}. 

The proof of Lemma~\ref{DifferentiableFamily} is analogous, 
starting with the $L$ and $(G_u)_u$ of Lemma~\ref{DIVPpolyIsPSPACEhard}. 
The only difference is that $p$ is now a polynomial 
and therefore we use $d _u (i) = i$ in \eqref{eq:positioning}
(and $\sigma = p$ in \eqref{eq: rho}). 

\subsection{Proof of the Main Theorems}
\label{subsection: proof of theorems}
Using the function families $(g_u)_u$ and $(h_u)_u$ 
obtained from Lemmas \ref{KTimesFamily} or \ref{DifferentiableFamily}, 
we construct the functions $g$ and $h$ in 
Theorems \ref{DifferentiableIsPspace} and \ref{KTimesIsCH} as follows. 
Divide $[0,1)$ into infinitely many subintervals $[l^-_u, l^+_u]$,
with midpoints $c_u$.
We construct $h$ by putting a scaled copy of $h_u$ onto $[l^-_u, c_u]$ and
putting a horizontally reversed scaled copy of $h_u$ onto $[c_u, l^+_u]$ 
so that $h(l^-_u) = 0$, $h(c_u) = 2^{-\rho'(|u|)} L(u)$ and $h(l^+_u) = 0$ where $\rho'$ is a polynomial.
In the same way, $g$ is constructed from $(g_u)_u$ so that $g$ and $h$ satisfy \eqref{eq:ode}.
We give the details of the proof of 
Theorem~\ref{KTimesIsCH} from Lemma~\ref{KTimesFamily}, 
and omit the analogous proof of Theorem~\ref{DifferentiableIsPspace} 
from Lemma~\ref{DifferentiableFamily}.

\begin{proof}[Proof of Theorem~\ref{KTimesIsCH}]
Let $L$ and $\mu$ be as Lemma~\ref{KTimesFamily}.
Define
$
  \lambda(x) = 2x + 2
$, $
  \gamma(x) = \mu(x, x) + x \lambda(x)
$
and for each $u$ let
$
 \Lambda_u = 2^{\lambda(|u|)}
$, $
 c_u = 1-{2^{-|u|}}+{2\bar{u}+1}/{\Lambda_u}
$, $
 l_u^\mp = c_u\mp{1}/{\Lambda_u}
$,
 where $\bar u \in \{0, \dots, 2^{|u|} - 1\}$ is the number represented by $u$ in binary notation.
Let $\rho$, $(g_u)_u$, $(h_u)_u$ be as in Lemma~\ref{KTimesFamily} 
corresponding to the above $\gamma$.

We define
 \begin{align} \label{eq:g}
 g \left(l^\mp_u \pm \frac{t}{\Lambda_u}, \frac{y}{\Lambda_u}\right)
  &= \begin{cases}
      \pm \displaystyle \sum_{l=0}^k \frac{\D^{(0,l)}g_u(t,1)}{l!} (y-1)^l 
      &  \text{if} \ 1<y, \\
      \pm g_u(t, y)      & \text{if} \ {-1} \le y \le 1, \\
      \pm \displaystyle \sum_{l=0}^k \frac{\D^{(0,l)}g_u(t,-1)}{l!} (y+1)^l  
      &  \text{if} \ 1<y, \\
    \end{cases} 
  \\
 h \left( l^\mp_u \pm \frac{t}{\Lambda_u} \right) 
  & = \frac{h_u(t)}{\Lambda_u}
\end{align}
for each string $u$ and $t \in [0,1)$, $y \in [-1, 1]$.
Let $g(1,y) = 0$ and $h(1) = 0$ for any $y \in [-1,1]$.

It can be shown similarly to the Lipschitz version 
\cite[Theorem 3.2]{kawamura2010lipschitz}
that $g$ and $h$ satisfy \eqref{eq:ode} and $g$ is polynomial-time computable.
Here we only prove that $g$ is of class $\classC^{(\infty, k)}$.
We claim that 
for each $i \in \N$ and $j \in \{0, \dots, k\}$, 
the derivative $\D _1 ^i \D _2 ^j g$ is given by 
\begin{equation}
   \D_1^i \D_2^j g \left(l^\mp_u \pm \frac{t}{\Lambda_u}, \frac{y}{\Lambda_u}\right)
   = \begin{cases}
      \pm \Lambda_u^{i+j} \sum^{k}_{l=j} \frac{\D^{(i,l)} g_u(t,1)}{(l-j)!}
      (y - 1)^l &  \text{if } y < -1,
      \\
      \pm \Lambda_u^{i+j} \D^{(i, j)} g_u(t, y) & \text{if } {-1} \le y \le 1,
      \\
      \pm \Lambda_u^{i+j} \sum^{k}_{l=j} 
      \frac{\D^{(i,l)} g_u(t, -1)}{(l-j)!} (y + 1)^l &  \text{if } 1<y
    \end{cases}  \label{eq:d1id2jg}
\end{equation}
for each $l_u^\mp \pm t/\Lambda_u \in [0,1)$ and $y/\Lambda_u \in [-1, 1]$, 
and by $\D _1 ^i \D _2 ^j g (1, y) = 0$. 
This is verified by induction on $i + j$. 
The equation \eqref{eq:d1id2jg} follows from calculation 
(note that this means verifying 
that \eqref{eq:d1id2jg} follows from the definition of $g$ when $i = j = 0$; 
from the induction hypothesis about $\D _2 ^{j - 1} g$ when $i = 0$ and $j > 0$; 
and from the induction hypothesis about $\D _1 ^{i - 1} \D _2 ^j g$ when $i > 0$).
That $\D _1 ^i \D _2 ^j g (1, y) = 0$ is 
immediate from the induction hypothesis if $i = 0$. 
If $i > 0$, the derivative
$\D_1^i \D_2^j g (1, y)$ is by definition the limit 
\begin{equation}
\lim_{s \to 1 - 0} \frac{\D_1^{i-1} \D_2^j g(1, y) - \D_1^{i-1} \D_2^j g (s, y)}{1 - s}.
\label{eq:limitofderivative}
\end{equation}
This can be shown to exist and equal $0$, 
by observing that the first term in the numerator is $0$
and the second term is bounded, when $s \in [l ^- _u, l ^+ _u]$, by 
 \begin{align}
&
  \lvert
   \D_1^{i-1} \D_2^j g (s, y)
  \rvert
  \le 
  \Lambda_u^{i-1+j} \sum^{k}_{l=j} \lvert \D^{(i-1,l)} g_u \rvert \cdot (\Lambda_u + 1)^l 
  \notag
\\
& \qquad
 \le
  \Lambda_u^{i-1+j}  \cdot k \cdot 2^{\mu(i-1, |u|) - \gamma(|u|)} \cdot (2\Lambda_u)^k
  \notag
\\
& \qquad
  \le 2^{(i-1+j+k)\lambda(|u|) + 2k + \mu(i-1, |u|)  - \gamma(|u|)}
  \le 2^{-2 \lvert u \rvert}
  \le 2^{-\lvert u \rvert + 1} (1 - s), 
  \label{eq:sizeofderivative}
 \end{align}
where the second inequality is from 
Lemma~\ref{KTimesFamily} (\ref{enum:smooth})
and the fourth inequality holds for sufficiently large $\lvert u \rvert$
by our choice of $\gamma$. 
The continuity of $\D _1 ^i \D _2 ^j g$ on $[0,1) \times [-1, 1]$ follows
from \eqref{eq:d1id2jg} and Lemma~\ref{KTimesFamily} (\ref{enum:boundary}).
The continuity on $\{1\} \times [-1, 1]$ is verified by 
estimating $\D_1^{i} \D_2^{j} g$ similarly to \eqref{eq:sizeofderivative}. 
\end{proof} 

\section{Other Versions of the Problem}

\subsection{Complexity of the final value}
\label{section: final value}
In the previous section, 
we considered the complexity of the solution~$h$ of the ODE as a real function. 
Here we discuss the complexity of the final value $h (1)$ and
relate it to tally languages (subsets of $\{0\}^*$), 
as did 
Ko~\cite{ko1983computational} and 
Kawamura~\cite[Theorem~5.1]{kawamura2010complexity}
for the Lipschitz continuous case.

We say that a language~$L$ \emph{reduces to} a real number $x$ 
if there is a polynomial-time oracle Turing machine $M$ 
such that $M^\phi$ accepts $L$ for any name $\phi$ of $x$.
Note that this is the same as 
the reduction in Definition~\ref{definition: reduction}
to a constant function with value~$x$.

\begin{theorem}
\label{theorem: final value of once}
For any tally language $T \in \classPSPACE$,
there are a polynomial-time computable function
$g \colon [0,1] \times [-1,1] \to \R$ 
of class $\classC ^{(\infty, 1)}$ and 
a function $h \colon [0,1] \to \R$
satisfying \eqref{eq:ode} 
such that $L$ reduces to $h(1)$.
\end{theorem}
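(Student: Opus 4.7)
The plan is to adapt the construction of Theorem~\ref{DifferentiableIsPspace} so that the single value $h (1)$ encodes all the bits $T (0 ^n)$, rather than distributing them across many distinct sample points of $h$. Since $T \in \classPSPACE$ is tally, Lemma~\ref{DIVPpolyIsPSPACEhard} (instantiated with a polynomial-space decider for $T$) gives a uniform family of polynomial-height difference equations whose bottom-right cells decide $T (0 ^n)$ on input $u = 0 ^n$. Feeding this into the proof of Lemma~\ref{DifferentiableFamily} yields, for any polynomial $\gamma$ we are free to pick, a polynomial-time computable family $(g _n, h _n) _n$ with $h _n \colon [0, 1] \to [-1, 1]$, $h _n (1) = 2 ^{-\rho (n)} T (0 ^n)$, each $g _n$ of class $\classC ^{(\infty, 1)}$, and satisfying the boundary condition~(\ref{enum:boundary}) and the derivative bound~(\ref{enum:smooth}) of Lemma~\ref{KTimesFamily} with $k = 1$.

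Next I would partition $[0, 1)$ into $I _n = [a _{n - 1}, a _n]$ with $a _n = 1 - 2 ^{-n}$ and $\Delta _n = 2 ^{-n}$, pick a polynomial $\tau$ much larger than $\rho$, set $\alpha _n = 2 ^{-\tau (n)}$ and $S _n = \sum _{m = 1} ^{n} \alpha _m \cdot 2 ^{-\rho (m)} T (0 ^m)$, and define
\begin{equation*}
 h (t) = S _{n - 1} + \alpha _n \, h _n \!\left( \frac{t - a _{n - 1}}{\Delta _n} \right),
 \quad
 g (t, y) = \frac{\alpha _n}{\Delta _n} \, g _n \!\left( \frac{t - a _{n - 1}}{\Delta _n}, \frac{y - S _{n - 1}}{\alpha _n} \right)
\end{equation*}
for $t \in I _n$, extending $g$ for $y$ outside the local box by a degree-$1$ Taylor polynomial as in \eqref{eq:g}, and setting $h (1) = \lim _n S _n$ and $g (1, y) = 0$. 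By construction $\D h = g (\cdot, h)$ holds on each $I _n$, and the values match up across the endpoints $a _n$; the numbers $\alpha _n 2 ^{-\rho (n)} = 2 ^{-\tau (n) - \rho (n)}$ occupy disjoint and easily located blocks of the binary expansion of $h (1)$, so $T$ reduces to $h (1)$ by an oracle machine that queries its name at a precision polynomial in $n$ and reads off the corresponding bit.

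The main obstacle is establishing $\classC ^{(\infty, 1)}$ smoothness at the accumulation point $t = 1$. Continuity across each interior boundary $a _n$ follows as in the proof of Theorem~\ref{KTimesIsCH}: differentiating the identity in condition~(\ref{enum:boundary}) with respect to $y$ yields $\D ^{(i, j)} g _n (0, y) = \D ^{(i, j)} g _n (1, y) = 0$ for all $i$ and all $j \in \{0, 1\}$, so all required partials of $g$ vanish at $t = a _n$ from both sides. At $t = 1$, the chain rule and condition~(\ref{enum:smooth}) give
\begin{equation*}
 \lvert \D _1 ^i \D _2 ^j g (t, y) \rvert \le \alpha _n ^{1 - j} \, \Delta _n ^{-(i + 1)} \cdot 2 ^{\mu (i, n) - \gamma (n)}
 \le 2 ^{(i + 1) n + \mu (i, n) - \gamma (n)}
\end{equation*}
for $t \in I _n$ and $j \in \{0, 1\}$, which decays super-polynomially in $n$ for each fixed $(i, j)$ once $\gamma$ is chosen of degree strictly greater than that of the $n$-dependence of $\mu$. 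A mean-value argument identical to the one in \eqref{eq:sizeofderivative} then shows $\D _1 ^i \D _2 ^j g (1, y) = 0$ and yields the required continuity at $t = 1$. Polynomial-time computability of $g$ reduces, via Lemma~\ref{lem:type1representation}, to that of $(g _n) _n$: given $t$ to precision $2 ^{-N}$, the index $n$ with $t \in I _n$ is found in time $\mathrm O (N)$, and near $t = 1$ the machine simply outputs $0$ once its query precision falls below the decay threshold of the bound above.
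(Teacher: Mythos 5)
Your proposal matches the paper's method: decompose $[0,1)$ into dyadic subintervals accumulating at $1$, place a scaled copy of $h_n$ on each, let the cumulative offsets encode the bits $T(0^n)$ at well-separated positions in the binary expansion of $h(1)$, and choose $\gamma$ large enough (relative to $\mu$ and the interval lengths) so that the derivative bounds from Lemma~\ref{KTimesFamily}(\ref{enum:smooth}), combined with the mean-value estimate as in \eqref{eq:sizeofderivative}, force all partials up to order $(\cdot, 1)$ to vanish continuously at $t = 1$. The only cosmetic differences from the paper are that you index the bit positions by a fresh dominating polynomial $\tau$ rather than the cumulative sum $\bar\rho$, and you extend $g$ beyond the local $y$-box by a degree-$1$ Taylor polynomial as in \eqref{eq:g} rather than the sawtooth reflection used in the proof of Theorem~\ref{theorem: final value of fixed}; both choices are equally valid.
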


\begin{theorem}
\label{theorem: final value of fixed}
Let $k$ be a positive integer. 
For any tally language $T \in \classCH$,
there are a polynomial-time computable function
$g \colon [0,1] \times [-1,1] \to \R$ 
of class $\classC ^{(\infty, k)}$ and 
a function $h \colon [0,1] \to \R$
satisfying \eqref{eq:ode} 
such that $L$ reduces to $h(1)$.
\end{theorem}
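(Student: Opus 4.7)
The plan is to build $g$ and $h$ by packing scaled copies of the functions from Lemma~\ref{KTimesFamily} onto subintervals of $[0,1)$ accumulating at $1$, so that $h(1)$ encodes $T$. Let $L$, $\mu$ be as in Lemma~\ref{KTimesFamily}; since $L$ is $\classCH$-hard and $T \in \classCH$, there is a polynomial-time many-one reduction $f$ from $T$ to $L$, and we set $u_n = f(0^n)$, so that $T(0^n) = L(u_n)$ and $|u_n| \le r(n)$ for some polynomial $r$. Fix $q(n) = 2n+2$ and a polynomial $\gamma$ to be determined, and apply Lemma~\ref{KTimesFamily} with parameters $k$ and $\gamma$ to obtain $(g_u)_u$, $(h_u)_u$ and the polynomial $\rho$.

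Let $l_n = 1 - 2^{-n}$, $\Lambda_n = 2^{n+1}$ (so $I_n := [l_n, l_{n+1}]$ has length $1/\Lambda_n$), $M_n = 2^{q(n) - \rho(|u_n|)}$, and $H_n = \sum_{m<n} 2^{-q(m)} L(u_m)$. For $s \in [0,1]$ define
\begin{align*}
 h(l_n + s/\Lambda_n) &= H_n + h_{u_n}(s)/M_n, \\
 g(l_n + s/\Lambda_n, y) &= (\Lambda_n/M_n)\, g_{u_n}(s, M_n(y - H_n))
\end{align*}
for $y$ with $M_n(y - H_n) \in [-1,1]$; extend $g$ in the $y$-direction outside this range by a $k$th-order Taylor polynomial exactly as in \eqref{eq:g}, and set $g(1,y) = 0$ and $h(1) = \sum_{n \ge 0} 2^{-q(n)} T(0^n)$. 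By construction $(g,h)$ satisfies \eqref{eq:ode} on each $I_n$, and since $q(n+1) - q(n) \ge 2$ prevents carries between successive contributions, $T(0^n)$ can be decoded in polynomial time from any name of $h(1)$ by querying the name with precision $2^{-q(n)-2}$; this yields the desired reduction of $T$ to $h(1)$.

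It remains to show that $g$ is polynomial-time computable and of class $\classC^{(\infty,k)}$, which follows the pattern of the proof of Theorem~\ref{KTimesIsCH}: smoothness across subinterval boundaries $t = l_n$ and across $y = H_n \pm 1/M_n$ is handed to us by Lemma~\ref{KTimesFamily}~(\ref{enum:boundary}) together with the Taylor extension, while polynomial-time computability uses that $\Lambda_n$, $M_n$ and $u_n$ are polynomial-time computable from a sufficient approximation of $t$. The main obstacle is continuity of $\D_1^i \D_2^j g$ at $t=1$ for $j \le k$. On $I_n$ the chain rule gives a bound of order $\Lambda_n^{i+1} M_n^{j-1} \cdot 2^{\mu(i,|u_n|) - \gamma(|u_n|)}$, so for fixed $(i,j)$ the exponent is polynomial in $n$ with positive terms $(i+1)(n+1) + (j-1)(q(n) + \rho(|u_n|)) + \mu(i,|u_n|)$ and a negative term $\gamma(|u_n|)$. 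The new difficulty compared to Theorem~\ref{KTimesIsCH} is that here the $y$-scaling $M_n$ can be much larger than the $t$-scaling $\Lambda_n$, because each subinterval must contribute only $2^{-q(n)} \ll 2^{-n}$ to $h(1)$, so the factor $M_n^{j-1}$ for $j = k$ introduces extra exponential growth that must be swamped by $\gamma$. Since $\mu(i,\cdot)$ is a polynomial of fixed degree in its second argument (from the explicit formula in the proof of Lemma~\ref{KTimesFamily}) and $|u_n|$ grows polynomially with $n$, choosing $\gamma$ to be a polynomial of sufficiently large degree makes $\gamma(|u_n|)$ eventually outpace those positive terms for every fixed $(i,j)$ with $j \le k$, forcing $\D_1^i \D_2^j g \to 0$ as $t \to 1$ and completing the verification.
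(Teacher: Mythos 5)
Your overall strategy (packing scaled copies of the $(g_u,h_u)$ from Lemma~\ref{KTimesFamily} onto subintervals $I_n$ accumulating at $1$, with $h(1)$ encoding $T$) is the same as the paper's, but your scaling is wrong in a way that cannot be repaired by enlarging $\gamma$. You make the contribution of $I_n$ to $h(1)$ equal to $2^{-q(n)}T(0^n)$ with $q(n)=2n+2$ independent of $\rho$; since $h_{u_n}(1)=2^{-\rho(|u_n|)}L(u_n)$, this forces the vertical scale $M_n=2^{q(n)-\rho(|u_n|)}$. In general $\rho(|u_n|)$ is a polynomial in $n$ of degree at least $1$ (for instance whenever $|u_n|\ge n$, which one may arrange by padding the reduction), so for large $n$ we have $M_n<1$: the copy of $g_{u_n}$ is \emph{stretched}, not shrunk, in the $y$-direction. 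You therefore have the two cases reversed: $M_n$ is much smaller than $\Lambda_n$, not larger, and the dangerous derivative order is $j=0$ rather than $j=k$. Indeed on $I_n$ one gets $\lvert g\rvert \le (\Lambda_n/M_n)\lvert g_{u_n}\rvert \le 2^{(n+1)-q(n)+\rho(|u_n|)+\mu(0,|u_n|)-\gamma(|u_n|)}$, and the positive term $\rho(|u_n|)$ cannot be offset by $\gamma(|u_n|)$: in the proof of Lemma~\ref{KTimesFamily}, $\rho(x)=\sigma(x)\bigl(\gamma(x)+r(x)+s(k)+k+3\bigr)$ with $\sigma(x)\to\infty$, so $\rho(x)\ge 2\gamma(x)$ for all large $x$, regardless of which polynomial $\gamma$ you pass in. Hence $g$ blows up as $t\to 1$ and is not even continuous there, let alone of class $\classC^{(\infty,k)}$.

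The paper avoids this by building the $h(1)$-spacing out of $\rho$ itself. Writing $F$ for the reduction and $\bar\rho(n)=\sum_{i<n}\rho(|F(0^i)|)$, the paper encodes $T$ as $h(1)=\sum_{n\ge0} T(0^n)\,2^{-(2n+\gamma(n)+\bar\rho(n+1))}$. The exponent splits as $(2n+\gamma(n)+\bar\rho(n))+\rho(|F(0^n)|)$, so the $y$-scaling on $I_n$ becomes $2^{2n+\gamma(n)+\bar\rho(n)}\ge 1$ and the copy of $g_{F(0^n)}$ is shrunk vertically. Since a shrunk copy no longer covers $[-1,1]$ in $y$, the paper does not Taylor-extend as you do; instead it tiles the $y$-axis by translated and alternately reflected copies via the $(2m+(-1)^m y)/2^{2n+\gamma(n)+\bar\rho(n)}$ device, and the cumulative $\bar\rho$ is exactly what guarantees that the offsets $H_n\cdot 2^{2n+\gamma(n)+\bar\rho(n)}$ are even integers so that the tiling is well defined across boxes. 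You could try the smaller patch $q(n)=2n+2+\rho(|u_n|)$, which does remove the blow-up of $g$, but you would then still need to re-check that decoding from $h(1)$ remains polynomial-time and that the shifted centres $H_n$ are compatible with the gluing of your Taylor extensions; the paper's cumulative denominators take care of both concerns at once.
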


We can prove Theorem~\ref{theorem: final value of fixed} 
from Lemma~\ref{KTimesFamily}
in the same way as the proof of \cite[Theorem~5.1]{kawamura2010complexity}.
We skip the proof of Theorem~\ref{theorem: final value of once}
since it is similar.

\begin{proof}
Let $T$ be any tally language in $\classCH$ and $k$ be any positive integer,
and let $L$ and $\mu$ be as Lemma~\ref{KTimesFamily}.
Define $
\lambda(x) = x + 1
$, $
\gamma(x) = \mu(x, x) + x \lambda(x)
$ and let $\rho$, $(g_u)_u$, $(h_u)_u$ be  as in Lemma~\ref{KTimesFamily} 
corresponding to the $\gamma$.
Since $L$ is $\classCH$-hard,
there are a polynomial-time function $F$ such that 
$T (0 ^i) =  L(F(0 ^i))$ for all $i$. 

Let $
l_n = 1 - 2^n
$ and $
\bar{\rho}(n) = \sum^{n-1}_{i = 0} \rho(|F(0 ^i)|)
$.  Define $g$ and $h$ as follows: 
when the first variable is in $[0,1)$, let
\begin{align}
 g \left(l_n + \frac{t}{2^{n+1}}, \frac{2m+(-1)^m y}{2^{2n+\gamma(n)+\bar{\rho}(n)}} \right)
 &=
 \frac{g_{F(0^n)}(t, y)}{2^{n-1+\gamma(n)+\bar{\rho}(n)}},
 \\
 h \left( l_n + \frac{t}{2^{n+1}} \right)
 &=
 \frac{h_{F(0^n)}(t)}{2^{2n+\gamma(n)+\bar{\rho}(n)}}
 + \sum^{n-1}_{i = 0} \frac{T (0^i)}{2^{2 i + \gamma (i) + \bar{\rho} (i + 1)}}
\end{align}
for each $n \in \N$, $t \in [0,1]$, $y \in [-1, 1]$ and $m \in \Z$; 
when the first variable is $1$,
let 
\begin{align} 
  g(1, y) 
&
 =
  0, 
\\
\label{equation: final value of h}
  h(1) 
&
 = 
 \sum^\infty_{n = 0} \frac{T (0^n)}{2^{2n+\gamma(n)+\bar{\rho}(n+1)}}. 
\end{align}
It can be proved similarly to the proof of Theorem~\ref{KTimesIsCH} 
that $g$ is polynomial-time computable and of class $\classC ^{(\infty, k)}$
and that $g$ and $h$ satisfy \eqref{eq:ode}.
The equation \eqref{equation: final value of h} implies 
that $T$ reduces to $h(1)$. 
\end{proof}

\subsection{Complexity of operators}
\label{section: constructive}

Both Theorems \ref{DifferentiableIsPspace} and \ref{KTimesIsCH}
state the complexity of the solution $h$ under the assumption 
that $g$ is polynomial-time computable.
But how hard is it to ``solve'' differential equations,
i.e., how complex is the operator that takes $g$ to $h$? 
To make this question precise,
we need to define the complexity of operators 
taking real functions to real functions.

Recall that, in order to discuss complexity of real functions,
we used string functions as names of elements in $\R$. 
Such an encoding is called a \emph{representation} of $\R$.
Likewise, 
we now want to encode real functions by string functions
to discuss complexity of real operators. 
In other words, we need to define representations of
the class $\classC _{[0, 1]}$ of continuous functions $h \colon [0,1] \to \R$ 
and class $\classLip _{[0, 1] \times [-1, 1]}$ of Lipschitz continuous functions $g \colon [0, 1] \times [-1, 1] \to \R$. 
The notions of computability and complexity depend on these representations.
Following \cite{kawamura2010operators},
we use $\deltabox$ and $\deltaboxLip$ as the 
representations of $\classC_{[0,1]}$ and $\classLip_{[0, 1] \times [-1, 1]}$, 
respectively.
It is known that 
$\deltabox$ is the canonical representation of $\classC_{[0, 1]}$ 
in a certain sense \cite{kawamura11:_funct_space_repres_and_polyn_time_comput}, 
and $\deltaboxLip$ is the representation defined by adding to $\deltabox$
the information on the Lipschitz constant.

Since these representations use string functions 
whose values have variable lengths,
we use \emph{second order polynomials}
to bound the amount of resources (time and space) of machines
\cite{kawamura2010operators}, 
and this leads to the definitions of second-order complexity classes
(e.g. $\classFPSPACEtwo$, polynomial-space computable),
reductions (e.g. $\redW$, polynomial-time Weihrauch reduction), 
and hardness.
Combining them with the representations of real functions mentioned above,
we can restate our theorems in the constructive form as follows.

Let $\OpIVP$ be the operator 
mapping a real function $g \in \classLip_{[0, 1] \times [-1, 1]}$ to
the solution $h \in \classC_{[0, 1]}$ of \eqref{eq:ode}.
The operator $\OpIVP$ is a partial function 
from $\classLip _{[0, 1] \times [-1, 1]}$ to $\classC _{[0, 1]}$
(it is partial because the trajectory may fall out of the interval $[-1, 1]$, 
see the paragraph following Theorem~\ref{KTimesIsCH}).
In \cite[Theorem 4.9]{kawamura2010operators}, the
$(\deltaboxLip, \deltabox)$-$\classFPSPACEtwo$-$\redW$-completeness of $\OpIVP$ 
was proved
by modifying
the proof of the results in the third row of Table~\ref{table:related}.
Theorem~\ref{DifferentiableIsPspace} can be rewritten in a similar way. 
That is, let $\OpIVP \mathord\upharpoonright ^{\classC ^{(\infty, 1)}}$ be the operator $\OpIVP$ 
restricted to class $\classC^{(\infty, 1)}$. Then we have: 

\begin{theorem}
\label{theorem: C1 constructive}
The operator $\OpIVP \mathord\upharpoonright ^{\classC ^{(\infty, 1)}}$ is $(\deltaboxLip, \deltabox)$-$\classFPSPACEtwo$-$\redW$-complete.
\end{theorem}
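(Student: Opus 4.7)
My plan is to establish the two halves of the completeness claim separately, leveraging the corresponding results already known for $\OpIVP$ on all of $\classLip_{[0,1]\times[-1,1]}$ and refining them where necessary. For the upper bound, observe that any $g\in\classC^{(\infty,1)}$ on the compact domain $[0,1]\times[-1,1]$ has $\D_2 g$ continuous and hence bounded, so $g$ is Lipschitz in its second argument; thus $\classC^{(\infty,1)}\subseteq\classLip_{[0,1]\times[-1,1]}$. The $(\deltaboxLip,\deltabox)$-$\classFPSPACEtwo$-computability of the whole operator $\OpIVP$ from \cite[Theorem~4.9]{kawamura2010operators} therefore restricts directly to $\OpIVP\mathord\upharpoonright^{\classC^{(\infty,1)}}$ and yields membership in $\classFPSPACEtwo$.

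For hardness, I would revisit the Weihrauch reduction constructed in \cite[Theorem~4.9]{kawamura2010operators}, which takes an arbitrary $\classFPSPACEtwo$-computable string-function operator $F$ and exhibits polynomial-time (second-order) maps $K$ and $H$ with $F(\phi)=H(\phi,\OpIVP(K(\phi)))$. The image $g_\phi=K(\phi)$ is a Lipschitz continuous function assembled from a uniform family of cells that encodes a difference-equation simulation of a polynomial-space machine, essentially the pattern of Lemma~\ref{DIVPpolyIsPSPACEhard} but with the discrete instance depending on $\phi$ via the oracle input. The key modification is to replace those Lipschitz cells by the smooth building blocks $\tilde g_{u,Y}$ from \eqref{eq:delta}, with $k=1$ and the linear positioning $d_u(i)=i$ mentioned after the proof of Lemma~\ref{KTimesFamily}, and to glue them onto intervals $[l_u^-,l_u^+]$ via the rescaling-and-mirroring of \eqref{eq:g}. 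This produces, uniformly in $\phi$, a function $g_\phi$ of class $\classC^{(\infty,1)}$ whose $\deltaboxLip$-name is computable from a name of $\phi$ in polynomial (second-order) time. The output map $H$ is unchanged: it reads the prescribed dyadic value of the solution $h_\phi=\OpIVP(g_\phi)$ to recover the bits of $F(\phi)$, exactly as in Kawamura's original reduction.

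The step I expect to be the main obstacle is verifying that the Lipschitz constant advertised by the $\deltaboxLip$-name of $g_\phi$ is bounded by a second-order polynomial in the input data, since $\deltaboxLip$ differs from $\deltabox$ precisely by this extra piece of information. This reduces to tracking the derivative bound $|\D^{(0,1)} g_u|\le 2^{\mu(0,|u|)-\gamma(|u|)}$ from Lemma~\ref{KTimesFamily}(\ref{enum:smooth}) through the scaling in \eqref{eq:g}, which contributes only a factor $\Lambda_u^{1+1}=2^{2\lambda(|u|)}$; everything else remains polynomial in $|u|$ and hence in the oracle-query-length parameter of $\phi$. Once this uniformity is established, the remaining verifications---the $\classC^{(\infty,1)}$-smoothness of $g_\phi$, validity of equation \eqref{eq:ode}, and correctness of $H$---are essentially the same calculations already carried out in the proof of Theorem~\ref{DifferentiableIsPspace}.
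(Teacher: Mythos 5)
Your proposal matches the route the paper indicates: the membership half comes from restricting the $\classFPSPACEtwo$-computability of $\OpIVP$ from \cite[Theorem~4.9]{kawamura2010operators} (since every $\classC^{(\infty,1)}$ function on the compact domain is Lipschitz), and the hardness half is obtained by making the construction of Theorem~\ref{DifferentiableIsPspace} uniform --- that is, replacing the Lipschitz cells in Kawamura and Cook's Weihrauch reduction by the smooth building blocks $\tilde g_{u,Y}$ with $k=1$ and linear positioning $d_u(i)=i$, and checking that all parameters of the resulting $g_\phi$ are computable from the input name in second-order polynomial time. This is precisely what the paper describes and then omits as routine, so your sketch is filling in the same argument. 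One small remark: the verification you single out as the likely obstacle (bounding the Lipschitz constant supplied in the $\deltaboxLip$-name) is actually benign here, since the choice of $\gamma$ in Lemma~\ref{KTimesFamily}(\ref{enum:smooth}) forces $|\D^{(0,1)}g_u|$ to decay with $|u|$ faster than the rescaling factor $\Lambda_u^2$ grows, so the overall Lipschitz constant of $g_\phi$ is bounded by an absolute constant independent of $\phi$; the more substantive bookkeeping is showing that the cell contents $G_u(i,T,Y)$ and the interval layout $[l_u^-,l_u^+]$ are producible within the query budget of a second-order polynomial-time machine, as in \cite[Theorem~4.9]{kawamura2010operators}.
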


To show this theorem,
we need to verify that the information used to construct functions in the proof of Theorem~\ref{DifferentiableIsPspace}
can be obtained easily from the inputs.
We omit the proof since it does not require any new technique.
Theorem~\ref{theorem: C1 constructive}
automatically implies Theorem~\ref{DifferentiableIsPspace} 
because of \cite[Lemmas 4.7 and 4.8]{kawamura2010operators} and 
the $\redW$ versions of \cite[Lemmas 3.11 and 3.12]{kawamura2010operators}. 

In contrast, 
the polynomial-time computability in the analytic case
(the last row of Table~\ref{table:related})
is \emph{not} a consequence of a statement based on $\deltabox$. 
It is based on the calculation of the Taylor coefficients, 
and hence we only know how 
to convert the Taylor sequence of $g$ at a point to that of $h$. 
In other words, 
the operator $\OpIVP$ restricted to the analytic functions
is not $(\deltaboxLip, \deltabox)$-$\classFPtwo$-computable, 
but $(\deltaTaylor, \deltaTaylor)$-$\classFPtwo$-computable, 
where $\deltaTaylor$ is the representation that 
encodes an analytic function using its Taylor coefficients at a point 
in a suitable way. 
More discussion on representations of analytic functions 
and the complexity of operators on them 
can be found in 
\cite{kawamura12:_unifor_polyt_comput_operat_univar}.


\begin{thebibliography}{10}

\bibitem{bournez11:_solvin_analy_differ_equat_in}
O. Bournez, D. Gra{\c c}a, and A. Pouly.
\newblock Solving analytic differential equations in polynomial time over unbounded domains.
\newblock In {\em Proceedings of the 36th International Symposium on Mathematical Foundations of Computer Science} (MFCS), 
LNCS 6907, 170--181, 2011.

\bibitem{coddington1955theory}
E.A. Coddington and N.~Levinson.
\newblock {\em Theory of Ordinary Differential Equations}.
\newblock McGraw-Hill, 1955.

\bibitem{kawamura2010complexity}
A.~Kawamura.
\newblock Complexity of initial value problems.
\newblock To appear in {\em Fields Institute Communications}.

\bibitem{kawamura2010lipschitz}
A.~Kawamura.
\newblock Lipschitz continuous ordinary differential equations are
  polynomial-space complete.
\newblock {\em Computational Complexity}, 19(2):305--332, 2010.

\bibitem{kawamura11:_funct_space_repres_and_polyn_time_comput}
A.~Kawamura.
\newblock On function spaces and polynomial-time computability.
\newblock Dagstuhl Seminar 11411: Computing with Infinite Data, 2011.

\bibitem{kawamura2010operators}
A.~Kawamura and S.~Cook.
\newblock Complexity theory for operators in analysis.
\newblock \emph{ACM Transactions on Computation Theory}, 4(2), Article 5, 2012.

\bibitem{kawamura12:_unifor_polyt_comput_operat_univar}
A. Kawamura, N. M{\"u}ller, C. R{\"o}snick, and M. Ziegler.
\newblock Parameterized uniform complexity in numerics: from smooth to analytic, from NP-hard to polytime. 
\newblock http://arxiv.org/abs/1211.4974. 

\bibitem{mfcs}
A.~Kawamura, H.~Ota, C. R{\"o}snick, and M. Ziegler.
\newblock Computational complexity of smooth differential equations. 
\newblock In \emph{Proceedings of the 37th International Symposium on Mathematical Foundations of Computer Science} (MFCS), 
LNCS 7464, 578--589, 2012. 

\bibitem{ko1983computational}
K.I. Ko.
\newblock On the computational complexity of ordinary differential equations.
\newblock {\em Information and Control}, 58(1-3):157--194, 1983.

\bibitem{ko1991complexity}
K.I. Ko.
\newblock {\em Complexity Theory of Real Functions}.
\newblock Birkh{\"a}user Boston, 1991.

\bibitem{ko1988computing}
K.I. Ko and H.~Friedman.
\newblock Computing power series in polynomial time.
\newblock {\em Advances in Applied Mathematics}, 9(1):40--50, 1988.

\bibitem{miller1970recursive}
W.~Miller.
\newblock Recursive function theory and numerical analysis.
\newblock {\em Journal of Computer and System Sciences}, 4(5):465--472, 1970.

\bibitem{muller1987uniform}
N.T. M{\"u}ller.
\newblock Uniform computational complexity of {T}aylor series.
\newblock {\em Automata, Languages and Programming}, pages 435--444, 1987.

\bibitem{parberry-schnitger}
I.~Parberry and G.~Schnitger. 
\newblock Parallel computation with threshold functions. 
\newblock {\em Journal of Computer and System Sciences}, 36(3):278--302, 1988. 

\bibitem{pour1979computable}
M.B. Pour-El and I.~Richards.
\newblock A computable ordinary differential equation which possesses no
  computable solution.
\newblock {\em Annals of Mathematical Logic}, 17(1-2):61--90, 1979.

\bibitem{toda}
S.~Toda. 
\newblock PP is as hard as the polynomial-time hierarchy. 
\newblock \emph{SIAM Journal on Computing}, 20(5):865--877, 1991.


\bibitem{toran1991complexity}
J.~Tor{\'a}n.
\newblock Complexity classes defined by counting quantifiers.
\newblock {\em Journal of the ACM}, 38(3):752--773, 1991.

\bibitem{wagner1986complexity}
K.W. Wagner.
\newblock The complexity of combinatorial problems with succinct input
  representation.
\newblock {\em Acta Informatica}, 23(3):325--356, 1986.

\bibitem{wegener-book}
I.~Wegener. 
\newblock \emph{Komplexit\"atstheorie: Grenzen der Effizienz von Algorithmen}. 
\newblock Springer, 2003. 
\newblock In German.
\newblock English translation by R.~Pruim: 
\emph{Complexity Theory: Exploring the Limits of Efficient Algorithms}, Springer, 2005. 

\bibitem{weihrauch00:_comput_analy}
K. Weihrauch.
\newblock {\em Computable Analysis: An Introduction}.
\newblock Texts in Theoretical Computer Science. Springer, 2000.

\bibitem{eatcs-la}
\begin{CJK*}{UTF8}{zhsong}%
太田，河村，ツィーグラー，レースニク．滑らかな常微分方程式の計算量．数理解析研究所講究録1799「アルゴリズムと計算理論の新展開」67〜72頁．平成24年．%
\end{CJK*}%
\newblock (H.~Ota, A.~Kawamura, M. Ziegler, and C. R{\"o}snick. 
\newblock Complexity of smooth ordinary differential equations. 
\newblock Presented at the 
Tenth EATCS/LA Workshop on Theoretical Computer Science, Kyoto, 2012.)
\newblock In Japanese. 
\end{thebibliography}
\end{document}